\newenvironment{proof}{{\bf Proof:  }}{\hfill\rule{2mm}{2mm}\vspace*{5pt}}
\newenvironment{proofof}[1]{{\vspace*{5pt} \noindent\bf Proof of #1:  }}{\hfill\rule{2mm}{2mm}\vspace*{5pt}}
\newtheorem{definition}{Definition}[section]
\newtheorem{corollary}{Corollary}
\newtheorem{theorem}{Theorem}[section]
\newtheorem{lemma}{Lemma}[section]
\newtheorem{fact}{Fact}[section]
\newcommand{\fomp}{\textsf{Fully Online Matching}\xspace}
\newcommand{\fobmp}{\textsf{Fully Online Bipartite Matching}\xspace}
\newcommand{\fofmp}{\textsf{Fully Online Fractional Matching}\xspace}
\newcommand{\obmp}{\textsf{Online Bipartite Matching}\xspace}
\newcommand{\ranking}{\textsf{Ranking}\xspace}
\newcommand{\eqdef}{\stackrel{\textrm{def}}{=}}
\newcommand{\expect}[2]{\operatorname{\mathbf E}_{#1}\left[#2\right]}
\newcommand{\vect}[1]{\ensuremath{\vec{#1}}}
\newcommand{\vecy}{\vect{y}}
\newcommand{\vecmv}[1][v]{\vect{y}_{\text{-}#1}}
\newcommand{\onev}{\ensuremath{\mathsf{1}}}
\newcommand{\opt}{\textsf{OPT}}
\newcommand{\pw}{p}
\newcommand{\wtf}{\textsf{Water-Filling}\xspace}
\title{Tight Competitive Ratios of Classic Matching Algorithms in the Fully Online Model }
\author{
	Zhiyi Huang\thanks{Department of Computer Science, The University of Hong Kong. Email: \texttt{\{zhiyi,zhtang,yhzhang2\}@cs.hku.hk}.} 
	\and Binghui Peng\footnote{Institute for Interdisciplinary Information Sciences, Tsinghua University. Part of the work was done while Binghui and Runzhou were visiting the University of Hong Kong. Email: \texttt{\{pbh15, trz15\}@mails.tsinghua.edu.cn}.}
	\and Zhihao Gavin Tang\footnotemark[1]
	\and Runzhou Tao\footnotemark[2]
	\and Xiaowei Wu\footnote{Department of Computer Science, City University of Hong Kong, Email: \texttt{wxw0711@gmail.com}.}
	\and Yuhao Zhang\footnotemark[1]}
\date{}
\begin{document}

\begin{titlepage}
	\thispagestyle{empty}
	\maketitle
	
	\begin{abstract}
	\thispagestyle{empty}

	Huang et al.~(STOC 2018) introduced the fully online matching problem, a generalization of the classic online bipartite matching problem in that it allows all vertices to arrive online and considers general graphs. They showed that the ranking algorithm by Karp et al.~(STOC 1990) is strictly better than $0.5$-competitive and the problem is strictly harder than the online bipartite matching problem in that no algorithms can be $(1-1/e)$-competitive.
	
	This paper pins down two tight competitive ratios of classic algorithms for the fully online matching problem. For the fractional version of the problem, we show that a natural instantiation of the water-filling algorithm is $2-\sqrt{2} \approx 0.585$-competitive, together with a matching hardness result.
	Interestingly, our hardness result applies to arbitrary algorithms in the edge-arrival models of the online matching problem, improving the state-of-art $\frac{1}{1+\ln 2} \approx 0.5906$ upper bound. 
	For integral algorithms, we show a tight competitive ratio of $\approx 0.567$ for the ranking algorithm on bipartite graphs, matching a hardness result by Huang et al. (STOC 2018). 
\end{abstract} 
\end{titlepage}

\section{Introduction}
\label{sec:intro}

Following the seminal work by Karp et al.~\cite{stoc/KarpVV90} that initiated the study of the \obmp problem by proposing the \ranking algorithm, online matching problems have drawn a lot of attentions in the online algorithm literature.
These problems have found numerous real-life applications, notably, in online advertising.
They are also the driving-force behind many important techniques for designing and analyzing online algorithms, including the randomized primal dual technique by Devanur et al.~\cite{soda/DevanurJK13}.

Recently, Huang et al.~\cite{stoc/HKTWZZ18} proposed a generalization of the \obmp problem called \fomp.
The generalization considers general graphs and allows all vertices to arrive online.
It captures a much wider family of real-life scenarios, including the ride-sharing problem.
Concretely, consider an undirected graph $G = (V, E)$.
Each step is either the arrival or the deadline of a vertex. 
At a vertex $v$'s arrival, all the edges between $v$ and those that arrive before $v$ are revealed. 
At its deadline, on the other hand, the algorithm must irrevocably either match it to an unmatched neighbor (if it is not matched already) or leave it unmatched. 
The model assumes that all neighbors of a vertex $v$ arrive before $v$'s deadline.
This turns out to be a natural condition when it comes to concrete scenarios such as ride-sharing.

Further, Huang et al.~\cite{stoc/HKTWZZ18} showed that the \fomp problem is quite intriguing from an algorithmic viewpoint in that 1) it takes a number of novel ideas to show that the \ranking algorithm by Karp et al.~\cite{stoc/KarpVV90} is strictly better than $0.5$-competitive even in the fully online setting, and 2) the fully online setting, even on bipartite graphs, is strictly harder than the original \obmp problem in that no algorithms can be $1 - 1/e \approx 0.632$-competitive.

\subsection{Our Contributions and Techniques.}

We develop better understandings on the \fomp problem by establishing two tight competitive ratios.
The first result considers the fractional version of the problem, where we are allowed to fractionally match each vertex to multiple neighbors so long as the total mass sum to at most one.
We show that the \wtf algorithm, which at each vertex's deadline matches its unmatched portion fractionally to all neighbors with smallest matched portion (i.e., the lowest water-level), gets a competitive ratio of $2 - \sqrt{2} \approx 0.585$.
We also construct a matching hard instance for \wtf. 
The hardness result applies to arbitrary algorithms if we consider edge arrival models~\cite{esa/BuchbinderST17}, even when preemptions are allowed~\cite{stacs/EpsteinLSW13, approx/McGregor05}, improving the best known bounds in these models.
The second result focuses on the integral problem and the \ranking algorithm.
We prove that its competitive ratio is exactly the $\Omega$ constant\footnote{This is the solution of $\Omega \cdot e^\Omega = 1$.} $\approx 0.567$ on bipartite graphs, improving the previous bound of $\approx 0.554$ and matching the previous hardness result by Huang et al.~\cite{stoc/HKTWZZ18}. 
See Figure~\ref{figure:result_comparison} for where our results sit compared with the previous works.

\begin{figure}
	\centering
	\includegraphics[width=\textwidth]{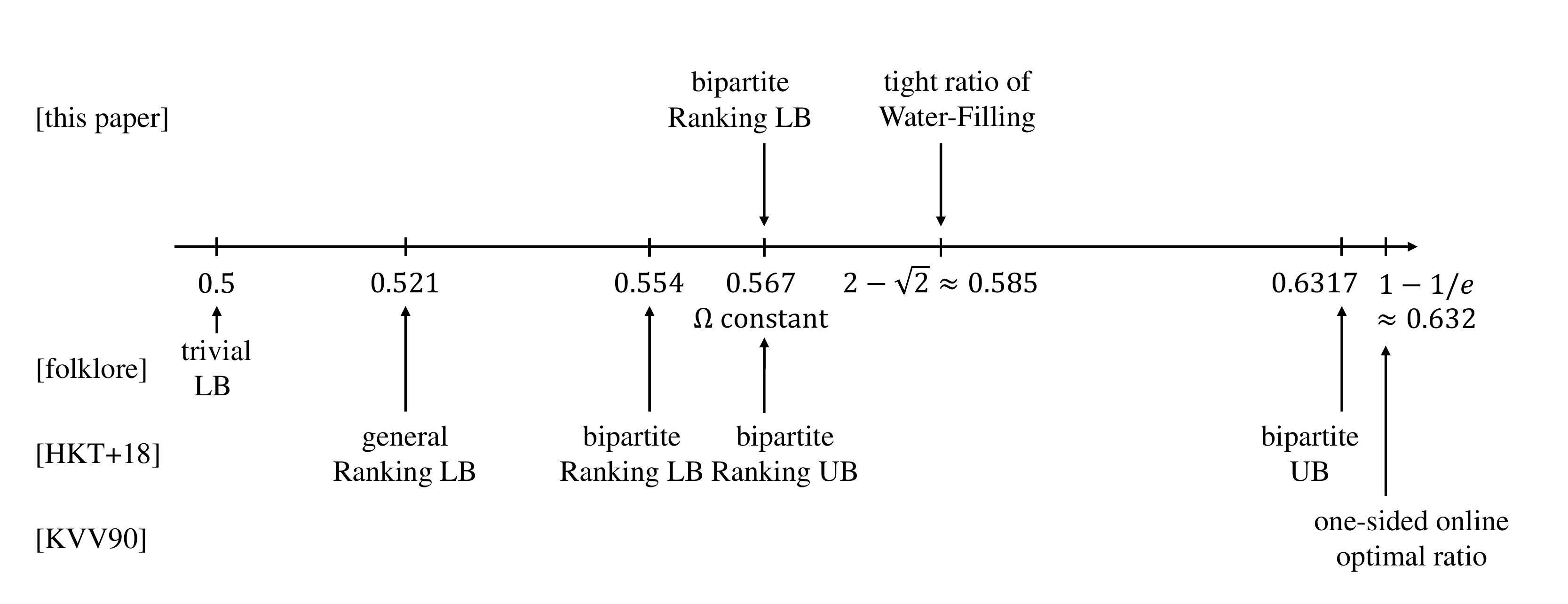}
	\caption{A comparison of the results in this paper and those in previous work. 
		LB means lower bound (algorithmic results), and UB means upper bound (hardness results).
	}
	\label{figure:result_comparison}
\end{figure}

\paragraph{Competitive Analysis of \wtf.}

The analysis of the \wtf algorithm is the relatively easy part of the paper. 
We follow the online primal dual framework by Buchbinder et al.~\cite{esa/BuchbinderJN07}, building on the notions of passive and active vertices by Huang et al.~\cite{stoc/HKTWZZ18}.

When a vertex $u$ matches another vertex $v$ at $u$'s deadline, Huang et al.~\cite{stoc/HKTWZZ18} referred to $u$ as the active vertex and $v$ as the passive vertex.
Intuitively, when edge $(u, v)$ is of concern, $v$ plays a role similar to an offline vertex in the \obmp problem since it sits back and allows $u$ to make the matching decision, while $u$ plays a role similar to an online vertex.
Following the same principle, for every vertex $v$, we refer to the portion that is matched before its deadline as the passive portion, and the portion that is matched at its deadline as the active portion. 

When a small portion $p$ of edge $(u, v)$ is chosen into the fractional matching, we shall split the gain of $p$ between the endpoints according to the current water-level $x_v$ of the passive vertex $v$ (i.e., the one with a later deadline).
For some function $g$ to be chosen in the analysis, $u$ shall get $\big( 1 - g(x_v) \big) \cdot p$ while $v$ shall get $g(x_v) \cdot p$.
Then, by an appropriate argument, we can lower bound the total gain of $u$ and $v$ by:
\begin{equation}
\label{eq:fractional_intro}
\int_0^{\pw_u} g(x) dx + \big(1-{\pw_u}\big)\big(1-g(x_v)\big) + \int_0^{x_v} g(x)dx.
\end{equation}
Here, $p_u$ is the passive portion of $u$, and $x_v$ is the passive portion of $v$ after $u$'s deadline.
The first term is the gain of $u$ due to its passive portion.
The second term lower bounds the gain of $u$ due to its active portion.
The third term lower bounds of the gain of $v$ due to its passive portion.

It remains to choose $g$ to maximize the above lower bound against the worst $p_u$ and $x_v$.
Unlike in the primal dual analysis of some other online matching problems, this is not exactly a standard ODE.
Nonetheless, we observe that it is almost symmetric w.r.t.\ $p_u$ and $x_v$. 
Indeed, choosing $g$ to be an appropriate linear function makes it symmetric and yields the optimal $2 - \sqrt{2}$ bound.

\paragraph{Matching Hardness for \wtf.}

Constructing a hard instance to show a matching $2 - \sqrt{2}$ upper bound on the competitive ratio of \wtf presents some technical obstacles beyond the existing techniques.
The construction is driven by Eqn.~\eqref{eq:fractional_intro}.
By our choice of $g$, Eqn.~\eqref{eq:fractional_intro} is equal to the lower bound $2 - \sqrt{2}$ only if $p_u$ and $x_v$ sum to precisely $2 - \sqrt{2}$.
Further, the performance of the algorithm is equal to the gain of the endpoints summing over all edges in the optimal matching in hindsight.
Therefore, a matching hard instance must satisfy that before the matching decision is made for an edge $(u, v)$ in the optimal matching, the water-levels of the two endpoints are prepared in advance so that the sum equals $2 - \sqrt{2}$.
This suggests that a tight instance for \wtf must look very different from the existing hard instances in the previous works (e.g., \cite{stoc/KarpVV90, stoc/DevanurJ12, stoc/HKTWZZ18}), where for every edge $(u, v)$ in the optimal matching, one of the two endpoints simply shows up with zero water-level and a matching decision is made for the edge.%
\footnote{It is easy to show that one cannot maintain at all time that some vertices have $2 - \sqrt{2}$ water-level while the other have $0$ with the \wtf algorithm.}

Our construction prepares the water-level of the vertices via a dynamic as follows.
It maintains at all time a set of vertices with some number of vertices at each water-level $x$ for $0 \le x \le 2 - \sqrt{2}$.
At each step, pick a vertex $u$ with an appropriate water-level $x_u$ and let it be $u$'s deadline.
Vertex $u$ connects to a subset of the vertices with water-level $2 - \sqrt{2} - x_u$, among which one vertex $v$ is $u$'s partner in the optimal matching.
After the step, $u$ and $v$ will be removed from the pool; 
new vertices (with zero water-level) will arrive to refill the pool if needed.
The matching decision of $u$ ``pumps up'' the water-level of all its neighbors to $2 - \sqrt{2} - x_u + \epsilon$.
Some of them will serve as the active endpoints with this water-level of some edges in the optimal matching;
some of them will serve as the passive counterparts;
the water-level of the remaining will be further ``pumped up'' by some vertex with water-level $x_u - \epsilon$.
We show how to maintain such a dynamic so that, in the long run, the endpoints of any edge in the optimal matching will have a total water-level close to $2 - \sqrt{2}$ when a matching decision is made for the edge.

\paragraph{Competitive Analysis of \ranking on Bipartite Graphs.}

We first explain why the previous analysis of Huang et al.~\cite{stoc/HKTWZZ18} is not tight on their hard instance.
Consider an edge $(u, v)$ in the optimal matching where $u$ has an earlier deadline. 
The previous analysis is tight only if there is a threshold $\theta$ such that whenever $v$'s rank is larger than $\theta$, $u$ is matched and $v$ is unmatched and, more importantly, whenever $v$'s rank is smaller than $\theta$, $v$ is passively matched and $u$ matches to the same vertex as in the previous case.
In the hard instance, however, $u$ is $v$'s only neighbor.
Therefore, if $u$'s own rank is sufficiently large such that $u$ matches actively, $u$ and $v$ will match each other when $v$'s rank is smaller than $\theta$.
Taking this extra gain into account gives the optimal ratio of $\approx 0.567$ for the hard instance.

Of course, we cannot na\"ively assume that one of the endpoints of any edge in the optimal matching will have only one neighbor.
The point is the previous approach that tries to characterize the matching status of $u$ and $v$ using a single threshold of $v$ cannot possibly capture the above extra gain.
We show that a good enough characterization in general takes three thresholds, a threshold of $u$ and two thresholds of $v$, one with $u$ in the graph and one without.
As a result, we get a new lower bound on the total expected gain of the endpoints that is strictly better than the previous one in all but a few bottleneck cases.
Then, we design a different gain sharing function that focuses on these bottleneck cases to obtain a tight analysis.

Finally, we remark that the three thresholds pin down when $u$ and $v$ match each other.
Previous works on online matching usually omit the gain from this case, which indeed happens with negligible probability in the worst case of those models (with a recent exception of \cite{icalp/HTWZ18}).
Our analysis shows it was just a lucky coincident that we do not need to consider the case when the endpoints match each other in those problems.
It becomes critical in a more general online matching model.

\subsection{Other Related Works}

Following Karp et al.~\cite{stoc/KarpVV90}, a series of works study different variants of the problem, including $b$-matching~\cite{tcs/KalyanasundaramP00}, adwords~\cite{jacm/MehtaSVV07,esa/BuchbinderJN07,stoc/DevanurJ12}, vertex-weighted matching~\cite{soda/AggarwalGKM11} and the random arrival model~\cite{stoc/KarandeMT11,stoc/MahdianY11,icalp/HTWZ18}.
Besides, the analysis of \ranking has been simplified in a series of papers~\cite{soda/GoelM08,sigact/BenjaminC08,soda/DevanurJK13}.

The \wtf algorithm has been studied to tackle several versions of the \obmp problems~\cite{esa/BuchbinderJN07, tcs/KalyanasundaramP00}. 
Devanur et al.~\cite{ec/Devanur0KMY13} considered the whole page optimization problem and extended the \wtf algorithm to use a carefully designed ``level function'' instead of a single water-level.
Wang and Wong~\cite{icalp/WangW15} considered an alternative model of \obmp that allows both sides of vertices to arrive online. 
They showed a $0.526$-competitive algorithm for a fractional version of the problem. 
Both analysis of~\cite{ec/Devanur0KMY13,icalp/WangW15} are based on the online primal dual framework by~\cite{esa/BuchbinderJN07}. 
This paper further illustrates the power of this framework for studying online fractional matching problems.

The hardness result in this paper improves the bounds for the following online matching models. 
In \textsf{online preemptive matching}~\cite{stacs/EpsteinLSW13, approx/McGregor05}, each edge arrives online and the algorithm must immediately decide whether to add the edge to the matching and to dispose of previously selected edges if needed. 
A harder edge-arrival model~\cite{esa/BuchbinderST17} forbids edge disposals. 
For both problems, the best previous bound stands at $\frac{1}{1+\ln 2}\approx 0.5906$~\cite{stacs/EpsteinLSW13}. 

Very recently, weighted variants of \fomp have been studied by~\cite{corr/abs-1803-01285,corr/abs-1806-10327}, both considering the ``windowed'' version of the problem, motivated by the ride-sharing applications.



\section{Preliminaries} \label{sec:preliminary}


%
We study both the fractional and the integral versions of \fomp. When the underlying graph is bipartite, we refer to the problem as \fobmp.
Consider the following standard linear program formulation of the matching problem and its dual.
\begin{align*}
\max: \quad & \textstyle \sum_{(u,v)\in E} x_{uv} && \qquad\qquad & \min: \quad & \textstyle\sum_{u \in V} \alpha_u\\
\text{s.t.} \quad & \textstyle \sum_{v:(u,v)\in E} x_{uv} \leq 1 && \forall u\in V & \text{s.t.} \quad & \alpha_u + \alpha_v \geq 1 && \forall (u,v)\in E \\
& x_{uv} \geq 0 && \forall (u,v)\in E & & \alpha_u \geq 0 && \forall u \in V
\end{align*}

\paragraph{Fractional Matching.}

In this setting, we may match edges fractionally. 
Let $x_{uv} \in [0, 1]$ be the fraction of edge $(u, v)$ in the matching. 
Assuming $u$ has an earlier deadline than $v$, this variable increases only at $u$'s deadline. 
We refer to it as \fofmp and study the classic \wtf algorithm (e.g., \cite{esa/BuchbinderJN07}) in this setting.
We give a formal definition of the algorithm below, in which the dual variables are updated as well. 
Note that the dual variables are used only in the analysis.
We fix an increasing function $g:[0,1]\to[0,1]$ to be specified later and use $x_u \eqdef \sum_{v:(u,v)\in E}x_{uv}$ to keep track of the water-level (i.e. total fractional mass) of $u$ at all time.
\begin{algorithm}
	\caption{The \wtf Algorithm}
	\label{alg:wtf}
	\begin{algorithmic}
		\State Initialize all $x_{uv}$'s and $\alpha_u$'s to be zero.
		\State {When the deadline of vertex $u$ is reached:}
		\State \quad Let $\pw_u=x_u$ be the water-level collected before $u$'s deadline. \Comment{$\pw_u$: passive water-level of $u$.}
		\State \quad Let $N(u)$ be the set of neighbors of $u$ whose deadlines are not reached.
		\State \quad \textbf{while} {$x_u < 1$ and $\min_{v \in N(u)}\{x_v\} < 1$} \textbf{do}	
		\State \quad\quad Allocate a $dx$ amount to each $x_{uv}$ for $v \in \arg\min_{v\in N(u)} \{x_v\}$.
		\State \quad\quad If $x_{uv}$ increases by $dx$, increase $\alpha_u$ and $\alpha_v$ respectively by
		\[
		d\alpha_u = (1-g(x_v))dx \quad \text{and} \quad d\alpha_v = g(x_v) dx.
		\]
	\end{algorithmic}
\end{algorithm}

We call the vertices in $N(u)$ the \emph{available} neighbors of $u$ at $u$'s deadline.
We further import the notions of active and passive vertices from~\cite{stoc/HKTWZZ18} and define them for both fractional and integral algorithms.
\begin{definition}[Active, Passive]
	For any edge $(u, v)$ that is (fractionally) matched by an algorithm at $u$'s deadline, we say that $u$ is active and $v$ is passive (w.r.t.\ edge $(u,v)$).
\end{definition}

\paragraph{Integral Matching.}

In this setting, $x_{uv}$'s must have binary values.
We will analyze the \ranking algorithm in Section~\ref{sec:bipartite} when the underlying graph $G$ is bipartite. 
Recall the definition of \ranking and some important notions from \cite{stoc/HKTWZZ18}.

\begin{algorithm}
	\caption{The \ranking Algorithm~\cite{stoc/HKTWZZ18}}
	\label{alg:ranking}
	\begin{algorithmic}
		\State (1) {a vertex $v$ arrives:}
		\State \phantom{(1)} pick $y_v \in [0,1)$ uniformly at random.
		\State (2) {a vertex $v$'s deadline is reached:}
		\State \phantom{(2)} \textbf{if} $v$ is unmatched,
		\State \phantom{(2) \textbf{if}} let $N(v)$ be the set of unmatched neighbors of $v$.
		\State \phantom{(2) \textbf{if}} \textbf{if} $N(v)=\emptyset$, \textbf{then} $v$ remains unmatched;
		\State \phantom{(2) \textbf{if}} \textbf{else} match $v$ to $\arg\min_{u\in N(v)} \{y_u\}$.
	\end{algorithmic}
\end{algorithm}

Let $M(\vecy)$ denote the matching produced when \ranking is run with $\vecy$ as the ranks.

\begin{definition}[Marginal Rank~\cite{stoc/HKTWZZ18}]
	For any $u$ and any ranks $\vecmv[u]$ of other vertices, the {marginal rank} $\theta$ of $u$ w.r.t. $\vecmv[u]$ is the largest value such that $u$ is passive in $M(y_u=\theta^\text{-}, \vecmv[u])$.
\end{definition}

The following is a restatement of Lemma 2.5 from \cite{stoc/HKTWZZ18} when restricted to bipartite graphs.
\begin{lemma}\label{lemma:alternating_path}
	In a bipartite graph, if $u$ is matched in $\vecy$, then from $M(\vecy)$ to $M(\vecmv[u])$, all neighbors of $u$ do not get better.
	Here, passive is better than active, which is in turns better than unmatched. 
	Conditioned on being passive, matching to a vertex with earlier deadline is better.
	Conditioned on being active, matching to a vertex with smaller rank is better.
\end{lemma}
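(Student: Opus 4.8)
The plan is to run \ranking in lockstep on the two instances $(G,\vecy)$ and $(G-u,\vecmv[u])$ — the second being the instance with $u$ deleted — processing arrival and deadline events in their common chronological order, and to track how the two partial matchings differ over time. Write $M_t$ and $M'_t$ for the partial matchings after processing event $t$ in the two runs (recall that in \ranking no edge is ever removed, so these grow monotonically). The invariant to maintain is: $M_t\,\triangle\,M'_t$ is a single alternating path $P_t$ emanating from $u$, empty exactly when $u$ is still unmatched in the first run; and, with respect to the ordering in the statement — passive $\succ$ active $\succ$ unmatched, refined by earlier‑deadline partner among passive vertices and smaller‑rank partner among active vertices — every vertex is no better in $M'_t$ than in $M_t$. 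The lemma is the $t=\text{final}$ case of this invariant restricted to $u$'s neighbors, using the hypothesis that $u$ ends up matched.

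The base case is trivial (both matchings empty, $P_t=\varnothing$). For the inductive step I would case on the event type. Arrivals change no edge in either run, so they preserve the invariant immediately. The work is at deadline events. When a vertex $w$ reaches its deadline it matches actively only if currently unmatched; by the invariant the sets of currently matched vertices in the two runs differ in at most the current endpoint $e_t$ of $P_t$, so the available‑neighbor sets of $w$ differ in at most the single vertex $e_t$. I would then split into sub‑cases according to whether $w=u$, whether $w$ lies on $P_t$, whether $e_t$ is a neighbor of $w$, and whether $w$'s minimum‑rank available neighbor agrees in the two runs; in each sub‑case one reads the new matchings off the \ranking rule, checks that the symmetric difference is again a single path from $u$ (no cycle forms, the path does not branch), and verifies the ``no better'' relation for the at‑most‑one newly affected vertex, using that \ranking picks the minimum available rank and that a vertex becomes passive exactly when a neighbor with an earlier deadline matches it.

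The main obstacle is precisely this case analysis, and within it the refined part of the ordering. That ``matched $\succ$ unmatched'' propagates along the alternating path is the KVV‑style argument; the subtlety special to the fully online model is handling vertices matched in both runs but with a changed role or partner — passive in one run and active in the other, active in both but to partners of different rank, or passive in both but to partners of different deadline. Here one must show that losing the old partner along the path can only demote a vertex within the stated ordering: a passive vertex that loses its earlier‑deadline partner either stays passive with a no‑earlier‑deadline partner, drops to active, or drops to unmatched; an active vertex that loses its smaller‑rank partner either stays active with a no‑smaller‑rank partner or drops to unmatched. These monotonicity facts are what force the particular definition of ``better'' in the statement, and checking them against the \ranking update rule in every sub‑case is the technical heart of the argument; bipartiteness is what lets the discrepancy $M_t\,\triangle\,M'_t$ remain a single path throughout, which is the simplification over the general‑graph version of Huang et al.
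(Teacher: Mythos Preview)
The paper does not supply its own proof of this lemma; it is imported as a restatement of Lemma~2.5 of Huang et al.\ \cite{stoc/HKTWZZ18} specialized to bipartite graphs, so there is no in-paper argument to compare against. Your lockstep alternating-path scheme is the standard way such monotonicity lemmas for \ranking are established and is essentially what the cited reference does for general graphs.

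There is, however, a real error in your invariant. You assert that \emph{every} vertex is no better in $M'_t$ than in $M_t$, but vertices on the same side of the bipartition as $u$ can strictly improve when $u$ is removed. Concretely: put $u,w$ on one side and $a,b$ on the other, with edges $(u,a),(w,a),(w,b)$, $u$'s deadline before $w$'s, and $y_a<y_b$. With $u$ present, $u$ takes $a$ at its deadline and $w$ is forced to actively match $b$; with $u$ removed, $w$ actively matches $a$, a strictly smaller-rank partner, so $w$ is strictly better under your ordering. The correct invariant is two-sided: vertices on the side \emph{opposite} $u$ (equivalently, at odd distance from $u$ along $P_t$) do not get better, while vertices on $u$'s side do not get worse. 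Since every neighbor of $u$ lies on the opposite side in a bipartite graph, the repaired invariant still delivers the lemma, and this is exactly where bipartiteness enters---it forces every neighbor of $u$ onto the ``no better'' parity of the cascade. Your closing remark that bipartiteness is what keeps $M_t\triangle M'_t$ a single path mislocates its role: the single-path structure already follows from the lockstep induction in any graph (the free endpoint can only extend into a vertex currently unmatched in both runs, and all interior path vertices are matched in both runs, so no cycle or branch can form); what may fail in a non-bipartite graph is that a neighbor of $u$ can land at an even position along $P_t$ and hence improve.

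With the invariant corrected in this way, the outline is sound; the case analysis you describe is indeed the substance of the argument.
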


We set primal variables according to \ranking. The randomized primal dual technique~\cite{soda/DevanurJK13} allows us to prove competitive ratio bounds through the following.
\begin{lemma}[\cite{stoc/HKTWZZ18}, Lemma 2.6]
	\label{lemma:dual_fitting}
	\ranking is $F$-competitive if we can set (non-negative) dual variables such that 1) $\sum_{(u,v)\in E} x_{uv} = \sum_{u \in V} \alpha_u$; and 2) $\expect{\vecy}{\alpha_u+\alpha_v} \geq F$ for all $(u,v)\in E$.
	%
	%
		%
		%
	%
\end{lemma}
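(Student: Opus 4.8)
The plan is to run the randomized primal--dual argument of Devanur et al.~\cite{soda/DevanurJK13}, exactly as in \cite{stoc/HKTWZZ18}: use conditions~(1) and~(2) to produce, after taking expectations, a \emph{deterministic} feasible solution of the dual LP whose objective value simultaneously lower bounds $F\cdot\opt$ and equals the expected size of \ranking's matching.

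First I would fix a realization of the ranks $\vecy$. The \ranking algorithm never matches a vertex more than once, so the primal variables $\{x_{uv}\}$ it sets form a feasible solution of the primal LP, and $\sum_{(u,v)\in E} x_{uv}$ is precisely the size of the matching $M(\vecy)$; write this quantity as $\alg(\vecy)$. Condition~(1) asserts $\sum_{u\in V}\alpha_u = \alg(\vecy)$ for every realization of $\vecy$, so averaging over $\vecy$ and using linearity of expectation gives $\sum_{u\in V}\expect{\vecy}{\alpha_u} = \expect{\vecy}{\alg(\vecy)}$.

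Next I would check that the deterministic vector $\big(\expect{\vecy}{\alpha_u}/F\big)_{u\in V}$ is feasible for the dual LP. Non-negativity is immediate from $\alpha_u\ge 0$ and $F>0$, and for each edge $(u,v)\in E$ we have $\expect{\vecy}{\alpha_u}/F + \expect{\vecy}{\alpha_v}/F = \expect{\vecy}{\alpha_u+\alpha_v}/F \ge 1$ by condition~(2). Now take a maximum matching $M^*$ and view its indicator as a feasible primal solution of value $|M^*| = \opt$; weak LP duality against the feasible dual solution above yields $\opt \le \sum_{u\in V}\expect{\vecy}{\alpha_u}/F$, i.e.\ $\sum_{u\in V}\expect{\vecy}{\alpha_u} \ge F\cdot\opt$. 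Combining this with the identity from the previous paragraph gives $\expect{\vecy}{\alg(\vecy)} \ge F\cdot\opt$, which is the definition of $F$-competitiveness.

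The lemma has essentially no hard step: all the genuine work is postponed to the later sections that actually define the $\alpha_u$'s and verify~(1) and~(2), and this statement is just the bookkeeping that turns such a gain-splitting scheme into a competitive ratio bound. The only things to keep in mind are that the $\alpha_u$'s are random variables, so it is linearity of expectation — and not any independence — that is being used, and that only weak duality is invoked, so it is harmless that on general graphs the fractional matching LP can have a strictly larger optimum than the maximum integral matching.
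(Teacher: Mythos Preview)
Your proposal is correct and is precisely the standard randomized primal--dual argument of Devanur et al.~\cite{soda/DevanurJK13} that underlies this lemma. Note that the present paper does not give its own proof: the lemma is quoted verbatim from~\cite{stoc/HKTWZZ18} and left unproved here, so there is nothing in this paper to compare your argument against; what you wrote matches the proof in the cited source.
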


\section{Tight Competitive Ratio of \wtf}

In this section, we give a tight analysis on the competitive ratio of the \wtf algorithm for the \fofmp problem.

\subsection{Lower Bound on the Competitive Ratio} \label{sec:wtf_lower}

We first prove that the competitive ratio of \wtf is at least $2-\sqrt{2}$.
Our approach is based on a primal dual analysis.

\begin{theorem}
	\label{thm:wtf_lower}
	\wtf is $(2-\sqrt{2})$-competitive.
\end{theorem}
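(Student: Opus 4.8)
The plan is to use the randomized primal-dual framework of Buchbinder et al., but in its deterministic form: since the dual updates are already built into Algorithm~\ref{alg:wtf}, I need only verify the two conditions analogous to Lemma~\ref{lemma:dual_fitting} — namely that (1) the primal objective equals the dual objective at the end, and (2) the dual constraint $\alpha_u+\alpha_v\ge 2-\sqrt{2}$ holds for every edge $(u,v)\in E$ — and then appeal to weak LP duality. Condition (1) is immediate from the update rule: each time $dx$ mass is added to $x_{uv}$, the dual increases by $d\alpha_u+d\alpha_v = (1-g(x_v))\,dx + g(x_v)\,dx = dx$, so summing over the whole run gives $\sum_u \alpha_u = \sum_{(u,v)} x_{uv}$. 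So the entire content is the approximate dual feasibility in condition (2).

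To prove (2), fix an edge $(u,v)$ and assume without loss of generality that $u$ has the earlier deadline. Let $p_u$ be the passive water-level of $u$ (its water-level just before its own deadline), and let $x_v$ denote the water-level of $v$ immediately after $u$'s deadline. I would bound $\alpha_u$ and $\alpha_v$ separately. First, $\alpha_u$ receives $\int_0^{p_u} g(x)\,dx$ from $u$'s passive portion (each unit of passive water at level $x$ contributes $g(x)$ to $\alpha_u$, since when $u$ was the \emph{passive} endpoint of an earlier edge its own level played the role of ``$x_v$'' in the update). For $u$'s active portion: at $u$'s deadline, the Water-Filling rule raises $x_u$ by filling lowest-level neighbors; the key observation is that $u$ ends its deadline either with $x_u=1$ or with all its available neighbors at level $\ge x_u$. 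If $u$ matches an amount $1-p_u$ actively, each such $dx$ is matched to some neighbor $w$ with $x_w \le$ (final level of $u$), and it contributes $(1-g(x_w))\,dx \ge (1-g(x_v))\,dx$ to $\alpha_u$ — here I use monotonicity of $g$ together with the fact that $v$ is itself one of the available neighbors at $u$'s deadline (so $v$'s level after the step, $x_v$, is at least the level at which filling stopped, hence at least $x_w$ for every $w$ that got filled). This gives $\alpha_u \ge \int_0^{p_u} g(x)\,dx + (1-p_u)\,(1-g(x_v))$. If instead $u$ was already saturated ($p_u=1$) the active term is vacuous and the bound still holds. Finally $\alpha_v \ge \int_0^{x_v} g(x)\,dx$, again because $v$'s water at any level $x$ contributes at least $g(x)$ to $\alpha_v$ (the worst case being that all of $v$'s water is passive); here I use that $v$'s final water-level is at least $x_v$ since water-levels never decrease. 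Adding the two yields exactly the quantity in \eqref{eq:fractional_intro}:
\[
\alpha_u + \alpha_v \;\ge\; \int_0^{p_u} g(x)\,dx \;+\; (1-p_u)\bigl(1-g(x_v)\bigr) \;+\; \int_0^{x_v} g(x)\,dx.
\]

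It remains to choose the increasing function $g:[0,1]\to[0,1]$ so that the right-hand side is at least $2-\sqrt{2}$ for all choices of $p_u,x_v\in[0,1]$. Following the hint in the introduction, I would try the affine guess $g(x) = a x + b$ and exploit the near-symmetry of the expression in $p_u$ and $x_v$: plug in, compute the two partial derivatives, and look for parameters making the bound symmetric and its worst-case value (over the feasible region, including the boundary cases $p_u\in\{0,1\}$ and $x_v\in\{0,1\}$) as large as possible. I expect the minimum to be attained when $p_u+x_v = 2-\sqrt{2}$, the optimal value to come out to $2-\sqrt{2}$, and the right constants to be $g(x) = \frac{x}{\sqrt{2}} + (1 - \tfrac{1}{\sqrt 2})$ or something of that form; I would then simply verify the inequality directly for these constants by a short calculus argument (checking the stationary point and the four corners). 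The main obstacle is not any single step but getting the active-portion lower bound genuinely rigorous: one must argue carefully that every bit of $u$'s active mass is matched to a neighbor whose level is $\le x_v$, which hinges on $v$ being available at $u$'s deadline and on the precise stopping behavior of the fill — and one must separately handle the degenerate cases where $u$ or $v$ reaches water-level $1$, where some of the integrals or the $(1-p_u)$ term collapse. Once the displayed lower bound is in hand, the choice of $g$ and the final numeric optimization are routine.
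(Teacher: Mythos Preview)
Your proposal is correct and follows essentially the same route as the paper: the same primal-dual setup, the same lower bound $\alpha_u+\alpha_v \ge \int_0^{p_u} g(x)\,dx + (1-p_u)(1-g(x_v)) + \int_0^{x_v} g(x)\,dx$, and the same linear choice $g(x)=\tfrac{1}{\sqrt{2}}x + (1-\tfrac{1}{\sqrt{2}})$. The only cosmetic difference is that the paper dispatches the degenerate case $x_v=1$ up front (using $\int_0^1 g = 1-\tfrac{\sqrt{2}}{4}\ge 2-\sqrt{2}$) and then, for $x_u=1$, verifies the inequality by completing the square to $\tfrac{\sqrt{2}}{4}\bigl((p_u+x_v)-(2-\sqrt{2})\bigr)^2 + (2-\sqrt{2})$ rather than by a derivative check.
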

\begin{proof}
	Recall that we update the primal variables according to \wtf and dual variables in a way that the dual objective always equals the primal objective. Using the standard primal dual technique, in order to prove that \wtf is $(2-\sqrt{2})$-competitive, it suffices to show that $\alpha_u + \alpha_v \ge 2-\sqrt{2}$ for all pairs of neighbors $u$ and $v$.
	
	Let $g(x) = \frac{\sqrt{2}}{2}x+1-\frac{\sqrt{2}}{2}$ be the function we used for defining dual variables.
	
	Fix any pair of neighbors $u,v$ where $u$ has an earlier deadline than $v$.
	Consider the moment right \textbf{after} $u$'s deadline. 
	It must be that either $x_u=1$ or $x_v=1$ (otherwise $x_u$ will further increase).
	As $v$ can only be matched passively, if $x_v=1$, we have
	\begin{equation*}
		\alpha_u + \alpha_v \ge \int_0^1 g(x) dx = 1 - \frac{\sqrt{2}}{4} \ge 2-\sqrt{2}.
	\end{equation*}
	
	Now suppose $x_u = 1$ and $x_v < 1$.
	Then, we have $\alpha_v = \int_0^{x_v} g(x)dx$.
	Next, consider the value of $\alpha_u$.
	Before $u$'s deadline, we have $\alpha_u = \int_0^{\pw_u} g(x)dx$ (recall that $\pw_u$ is the passive water-level of $u$).
	Since $x_u = 1$, and the water-level of $v$ after the deadline of $u$ is $x_v<1$,
	at any moment when the water-level of $u$ is increased from $\pw_u$ to $1$,
	the neighbor that $u$ matches has a water-level at most $x_v$.
	Hence, we have
	\begin{equation*}
	\alpha_u \ge \int_0^{\pw_u} g(x) dx + \int_{\pw_u}^1 (1-g(x_v))dx = \int_0^{\pw_u} g(x) dx + (1-\pw_u)(1-g(x_v)).
	\end{equation*}
	
	Summing the lower bounds on the two dual variables and by the definition of $g$, we have
	\begin{align*}
	\alpha_u+\alpha_v \geq & \int_0^{\pw_u} g(x) dx + (1-{\pw_u})(1-g(x_v)) + \int_0^{x_v} g(x)dx \\
	= & \frac{\sqrt{2}}{4}(\pw_u^2+x_v^2) + (1-\frac{\sqrt{2}}{2})(\pw_u+x_v) + (1-\pw_u)(\frac{\sqrt{2}}{2}-\frac{\sqrt{2}}{2}x_v) \\
	= & \frac{\sqrt{2}}{4}\left((\pw_u+x_v) - (2-\sqrt{2})\right)^2 + 2-\sqrt{2} \geq 2-\sqrt{2}.
	\end{align*}
	
	Hence, in both cases we have $\alpha_u+\alpha_v\geq 2-\sqrt{2}$, which gives the $2-\sqrt{2}$ lower bound on the competitive ratio of \wtf.
\end{proof}

\subsection{Upper Bound on the Competitive Ratio}
\label{sec:wtfhardness}

In this section we explicitly construct a hard instance, for which \wtf gives a solution of value $(2-\sqrt{2})\cdot \opt$.

\paragraph{Hard Instance.}
Let there be $2k\cdot m$ vertices, which are partitioned into $m$ groups of size $2k$.
For all $t\in[m]$, let the vertices in the $t$-th group be $U_t\cup V_t$, where $U_t = \{u_{t,1},\ldots,u_{t,k}\}$ and $V_t = \{v_{t,1},\ldots,v_{t,k}\}$.
Let $h:[0,1]\to [0,1]$ be a decreasing function\footnote{When $h(x)\equiv 1$, our instance becomes the $\frac{1}{1+\ln 2}\approx 0.5906$ hard instance by~\cite{stacs/EpsteinLSW13} for the edge arrival model.} (to be determined later) with $h(0)=1$ and $h(1)=0$. There are two types of edges in the graph (refer to Figure~\ref{figure:subgraph_induced}):
\begin{enumerate}
	\item[Upper triangle edges between $U_t$ and $V_t$:] $\forall t\in[m]$, $i\in[k]$ and $j\geq i$, $(u_{t,i}, v_{t,j}) \in E$;
	\item[$h$-induced edges between $U_t$ and $U_{t+1}$:] $\forall t\in[m-1]$, $i\in[k]$ and $j\leq \lfloor k\cdot h(\frac{i-1}{k}) \rfloor$, $(u_{t,i},u_{t+1,j}) \in E$.
\end{enumerate}

\begin{figure}[H]
	\centering
	\resizebox*{65mm}{!}
	{
	\begin{tikzpicture}
	[font=\bf,line width=0.3mm,scale=.8,auto=left,every node/.style={circle,draw}]
	\def \n {4}
	\foreach \i in {1,...,\n}
	{
		\node at (-4,1.4*\n-1.4*\i) (w\i) {$u_{2,\i}$};
	};
	\foreach \i in {1,...,\n}
	{
		\node at (1,1.4*\n-1.4*\i) (u\i) {$u_{1,\i}$};
	};
	\foreach \i in {1,...,\n}
	{
		\node at (6,1.4*\n-1.4*\i) (v\i) {$v_{1,\i}$};
		\foreach \j in {1,...,\i}
		{
			\draw (u\j) to (v\i);
		};
	};
	\draw (u1) to (w1);
	\draw (u1) to (w2);
	\draw (u1) to (w3);
	\draw (u1) to (w4);
	\draw (u2) to (w1);
	\draw (u2) to (w2);
	\draw (u2) to (w3);
	\draw (u3) to (w1);
	\draw (u3) to (w2);
	\draw (u3) to (w3);
	\draw (u4) to (w1);
	\end{tikzpicture}
	}
	\caption{Subgraph induced by $U_t\cup V_t \cup U_{t+1}$: illustrating example with $t=1$ and $k=4$}
	\label{figure:subgraph_induced}
\end{figure}
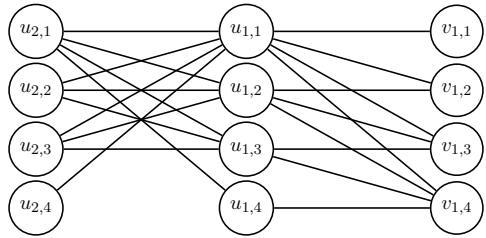

Finally, let the deadlines of the $u$ vertices be reached first, following the lexicographical order on $(t,i)$.
Then let the deadlines of the $v$ vertices be reached, i.e., after the deadline of $u_{m,k}$.%
\footnote{The relative order of the deadlines of $v$ vertices does not matter, as long as $v_{t,i}$'s deadline is after $u_{t,i}$'s deadline.}

It is easy to see that the hard instance is bipartite, where $(U_1\cup U_3\cup \dots) \cup (V_2\cup V_4\cup \dots)$ and $(U_2\cup U_4\cup \dots) \cup (V_1\cup V_3\cup \dots)$ are the two sides of vertices.
This graph admits a perfect matching, in which $u_{t,i}$ matches $v_{t,i}$ for all $t\in[m],i \in [k]$ and hence, $\opt = km$.


We first construct the function $h$ and prove the following technical lemma.
Let $c=2-\sqrt{2}$, and function $f:[0,c]\to [0,1]$ be defined as
\[
f(x) \eqdef \frac{1}{2}\left( \ln(1-x)+\ln(1-c+x) \right) + \frac{1}{\sqrt{2}(x-1)}+\frac{2+\sqrt{2}-\ln(1-c)}{2}.
\]
Let $\tau(x) \eqdef f^{-1}(x)$ and $h(x) \eqdef f(c-f^{-1}(x))$.
It is not difficult to see that $f$ is strictly decreasing. 
Hence, functions $h:[0,1]\to [0,1]$ and $\tau:[0,1]\to [0,c]$ are well defined.
Moreover, since $f(0) = 1$ and $f(c) = 0$, we have that $h$ is decreasing, $h(0) = 1$ and $h(1) = 0$, as required in the construction of the hard instance.
These functions might seem mysteries at this point, we will show a connection between the functions $h$ and $g$ via duality in Appendix~\ref{sec:wtf_gh_dual}, where $g$ is the gain sharing function that we used to define the dual variables in \wtf.


\begin{lemma}\label{lemma:property_of_h}
	For all $x\in [0,1]$ we have
	\begin{equation*}
	\int_0^x \frac{1-\tau(y)}{1-y+h(y)} dy = c-\tau(x),
	\quad \int_0^1 \tau(y)dy = 1-c
	\quad \text{and} \quad
	\int_0^1 \frac{1}{1-y+h(y)} dy < 1.
	\end{equation*}
\end{lemma}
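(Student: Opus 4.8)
The plan is to reduce all three identities to a single functional equation for $f$. Write $f(s) = \tfrac12\bigl(\ln(1-s)+\ln(1-c+s)\bigr) + \tfrac{1}{\sqrt2\,(s-1)} + K$, where $K=\tfrac{2+\sqrt2-\ln(1-c)}{2}$. The crucial structural observation is that the logarithmic part of $f$ is invariant under the reflection $s\mapsto c-s$, so that $f(s)-f(c-s) = \tfrac{1}{\sqrt2(s-1)} - \tfrac{1}{\sqrt2(c-s-1)}$. Plugging in the explicit form of $f'$ and simplifying — the only non-formal ingredient being the scalar identity $1-\tfrac{c}{2}=\tfrac{1}{\sqrt2}$, which is exactly where $c=2-\sqrt2$ is used — one obtains the key identity
\[
1 - f(s) + f(c-s) \;=\; (s-1)\,f'(s) \qquad (0\le s\le c). \tag{$\star$}
\]
I would establish $(\star)$ first, by this direct computation, and also record the boundary values $f(0)=1$, $f(c)=0$ (hence $\tau(0)=c$, $\tau(1)=0$).

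For the first identity, set $\Phi(x)=\int_0^x \tfrac{1-\tau(y)}{1-y+h(y)}\,dy$. Then $\Phi(0)=0=c-\tau(0)$, so it suffices to show $\Phi'(x)=-\tau'(x)$. Differentiating $x=f(\tau(x))$ gives $\tau'(x)=1/f'(\tau(x))$; writing $s=\tau(x)$, so that $x=f(s)$ and $h(x)=f(c-s)$, the required equality $\tfrac{1-\tau(x)}{1-x+h(x)}=-\tau'(x)$ becomes precisely $1-f(s)+f(c-s) = -(1-s)f'(s)$, i.e.\ $(\star)$. Matching values at $x=0$ and derivatives on $[0,1)$, and using continuity at $x=1$, gives $\Phi(x)=c-\tau(x)$ for all $x\in[0,1]$.

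For the second identity, the substitution $y=f(s)$ (legitimate since $f$ is a continuous strictly decreasing bijection $[0,c]\to[0,1]$) combined with integration by parts gives $\int_0^1 \tau(y)\,dy = -\int_0^c s\,f'(s)\,ds = \int_0^c f(s)\,ds$, using $f(c)=0$. Now integrate $(\star)$ over $[0,c]$: the left-hand side integrates to $c$ because $\int_0^c f(c-s)\,ds=\int_0^c f(s)\,ds$, while integration by parts on the right gives $\bigl[(s-1)f(s)\bigr]_0^c - \int_0^c f(s)\,ds = 1-\int_0^c f(s)\,ds$ (here $f(0)=1$, $f(c)=0$). Equating the two expressions yields $\int_0^c f(s)\,ds = 1-c$, hence $\int_0^1 \tau(y)\,dy = 1-c$. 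For the third identity, the same substitution gives $\int_0^1 \tfrac{dy}{1-y+h(y)} = -\int_0^c \tfrac{f'(s)}{1-f(s)+f(c-s)}\,ds$, and by $(\star)$ the integrand equals $\tfrac{1}{1-s}$ (the single zero of $f'$ at $s=0$ is harmless, being a null set and with $\int_0^c\tfrac{ds}{1-s}$ convergent), so the integral equals $-\ln(1-c)=\ln(\sqrt2+1)$; since $\sqrt2+1<e$ this is strictly less than $1$.

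I expect the only genuine piece of work to be the verification of $(\star)$, a careful but purely elementary manipulation of $f$ and $f'$; everything downstream of $(\star)$ is routine calculus. The only points deserving a word of care are the legitimacy of the change of variables $y=f(s)$ near $y=1$ — where $1-y+h(y)\to0$ and the integrands blow up — which causes no problem since the relevant improper integrals converge, their values $c$ and $-\ln(1-c)$ being finite.
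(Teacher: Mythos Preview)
Your proposal is correct. For the first two identities it follows the paper's proof almost verbatim: both arguments hinge on the functional equation $(\star)$ (the paper writes it as $1-f(\phi)+f(c-\phi)+(1-\phi)f'(\phi)=0$), verify the first identity by matching derivatives after the substitution $s=\tau(x)$, and obtain the second by integrating $(\star)$ over $[0,c]$ and using $\int_0^1\tau = \int_0^c f$.

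For the third identity your route is genuinely different. The paper does \emph{not} compute the integral: it observes that $1-\tau(y)$ and $\tfrac{1}{1-y+h(y)}$ are both increasing and applies Chebyshev's integral inequality to the already-established identity $\int_0^1 \tfrac{1-\tau(y)}{1-y+h(y)}\,dy = c$, obtaining $c > \bigl(\int_0^1(1-\tau)\bigr)\bigl(\int_0^1 \tfrac{1}{1-y+h(y)}\bigr) = c\cdot\int_0^1 \tfrac{1}{1-y+h(y)}$. Your approach instead pushes the substitution $y=f(s)$ through and uses $(\star)$ once more to reduce the integral to $\int_0^c \tfrac{ds}{1-s} = -\ln(1-c) = \ln(1+\sqrt2)$. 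Your argument is a bit more delicate --- you must note that the apparent singularity of the original integrand at $y=1$ is absorbed by the factor $f'(s)$ from the substitution (since $f'(0)=0$), leaving a bounded integrand in $s$ --- but in return it yields the exact value $\ln(1+\sqrt2)\approx 0.881$ rather than just the bound $<1$. The paper's Chebyshev argument is slicker and avoids any discussion of improper integrals, at the cost of being less informative.
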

\begin{proof}
	First we show the first equation, i.e., for all $x\in[0,1]$ we have $\int_0^x \frac{1-\tau(y)}{1-y+h(y)} dy = c-\tau(x)$.
	Note that $\tau(0) = c$ and, thus, both sides equal $0$ when $x = 0$.
	It suffices to check that for all $x\in[0,1]$, $\frac{1-\tau(x)}{1-x+h(x)} = -\tau'(x)$.
	Let $\phi = \tau(x)\in[0,c]$, we have $f(\phi) = x$ and $h(x) = f(c-f^{-1}(x)) = f(c-\phi)$.
	Then, we only need to check that
	\begin{equation*}
	\frac{1-\phi}{1-f(\phi)+f(c-\phi)} = -\tau'(x) = -\frac{1}{f'(\phi)},
	\end{equation*}
	which is true as $f$ is defined such that for all $\phi \in[0,c]$,
	\begin{equation*}
	1-f(\phi)+f(c-\phi)+(1-\phi)f'(\phi) = 0.
	\end{equation*}
	
	Taking integration from $0$ to $c$, the contributions of the 2nd and the 3rd terms cancel. 
	We have
	\begin{equation*}
		0 = c+\int_0^c (1-x)f'(x) dx = c - 1 + \int_0^c f(x) dx,
	\end{equation*}
	which implies the second equation because $\int_0^1 \tau(y)dy = \int_0^c f(x) dx = 1-c$, where the first equality follows because $\tau = f^{-1}$, $f$ is strictly decreasing, $f(0) = 0$, and $f(c) = 0$.
	
	Now we prove the last equation, i.e., $\int_0^1 \frac{1}{1-y+h(y)} dy < 1$.
	
	Observe that both $1-\tau(y)$ and $\frac{1}{1-y+h(y)}$ are increasing in terms of $y$.
	Hence we have
	\begin{equation*}
	c = c-\tau(1) = \int_0^1 \frac{1-\tau(y)}{1-y+h(y)} dy > \int_0^1 (1-\tau(y)) dy \cdot \int_0^1 \frac{1}{1-y+h(y)} dy = c\cdot \int_0^1 \frac{1}{1-y+h(y)} dy.
	\end{equation*}
	
	Dividing both sides by $c$ proves the last equation.
\end{proof}

Now we analyze the performance of \wtf on this instance. 
We first prove that by running \wtf on the hard instance, the passive water-levels of almost all vertices are strictly smaller than $1$.

\begin{lemma} \label{lemma:passive_<_1}
	For large enough $k$, \wtf produces a fractional matching with $p_{u_{t,i}} < 1$ for all $t \in [m],i\in[k]$ and $p_{v_{t,i}} < 1$ for all $t \in [m-1],i\in[k]$.
\end{lemma}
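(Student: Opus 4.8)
The goal is to show that for large $k$, running \wtf on the hard instance leaves $p_{u_{t,i}} < 1$ for all $t \in [m], i \in [k]$ and $p_{v_{t,i}} < 1$ for all $t \in [m-1], i \in [k]$. The approach is to track the water-levels of the $u$-vertices through time by induction on the group index $t$, using the combinatorial structure of the upper-triangle edges and the $h$-induced edges together with the integral identities from Lemma~\ref{lemma:property_of_h}. The key intuition, already hinted in the paper, is that $\tau$ is the ``target'' water-level profile: after $u_{t,i}$'s deadline I expect the water-level of $u_{t,i}$ (its passive portion) to be close to $\tau\big(\frac{i-1}{k}\big)$, which lies in $[0,c]$ with $c = 2-\sqrt 2 < 1$, hence strictly below $1$, and the $h$-induced edges are designed precisely so that this profile propagates from group $t$ to group $t+1$.

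\textbf{Key steps.} First I would set up the induction hypothesis: when the deadlines of group $U_t$ begin to be processed, the water-levels of the not-yet-expired vertices in $U_t$ follow (approximately, up to $O(1/k)$ errors) the profile $x_{u_{t,i}} \approx \tau\big(\frac{i-1}{k}\big)$ for $i \in [k]$, coming from the $h$-induced edges out of $U_{t-1}$ (for $t=1$ the base case is trivial since all water-levels start at $0$, and $\tau$ should be taken to be $0$ there — or rather the base case is handled separately since $U_1$ has no incoming $h$-edges). Second, I would analyze what happens at the deadline of $u_{t,i}$: its available neighbors are the $v_{t,j}$ with $j \ge i$ (those $v$'s have later deadlines) plus possibly some $u_{t+1,j}$ with $j \le \lfloor k\,h(\frac{i-1}{k})\rfloor$ (if $t < m$, since those also have later deadlines). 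The \wtf rule pours water into the lowest-level neighbors; I would argue that the $v_{t,j}$'s for $j \ge i$ currently sit at the common level determined by the earlier $u_{t,1},\dots,u_{t,i-1}$ deadlines, while the relevant $u_{t+1,j}$'s sit at a level determined by earlier groups — and that pouring water equalizes these and raises everyone, with $u_{t,i}$ itself stopping at level $\approx \tau(\frac{i-1}{k}) < 1$ because the total demand from its roughly $k - i + 1$ available $v$-neighbors plus $\approx k\,h(\frac{i-1}{k})$ available $u_{t+1}$-neighbors is large enough to absorb all its unmatched mass before $x_{u_{t,i}}$ reaches $1$. The bookkeeping here is exactly what the first identity of Lemma~\ref{lemma:property_of_h}, $\int_0^x \frac{1-\tau(y)}{1-y+h(y)}\,dy = c - \tau(x)$, encodes: the integrand $\frac{1-\tau(y)}{1-y+h(y)}$ is the rate at which the shared water-level of the downstream neighbors rises per unit of ``index progress'' $y = \frac{i-1}{k}$, and integrating gives that this level stays bounded by $c - \tau(x)$, so downstream vertices also stay below $1$. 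The third inequality $\int_0^1 \frac{1}{1-y+h(y)}\,dy < 1$ is what guarantees the $v_{t,j}$'s (and the carried-over $u_{t+1,j}$'s) never get saturated to level $1$ within one group. Third, I would verify the passive water-level claim for the $v$-vertices: since all $u$-deadlines come before all $v$-deadlines, $p_{v_{t,i}}$ is exactly the water $v_{t,i}$ accumulated from the $u$-deadlines, which by the above is bounded by $\int_0^1 \frac{1}{1-y+h(y)}\,dy < 1$ (for $t \le m-1$; the vertices in $V_m$ are not claimed and indeed may saturate). Finally, I would track the error terms: each group introduces $O(1/k)$ rounding errors from the floors $\lfloor k\,h(\cdot)\rfloor$ and from discretizing the integral, and these errors could in principle accumulate over $m$ groups; I need the final discrepancy from the target profile to stay below the slack $1 - c = \sqrt 2 - 1 > 0$, which holds for $k$ large relative to $m$ (or, if $m$ is allowed to grow, for $k$ growing fast enough — the theorem only needs existence of a good instance).

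\textbf{Main obstacle.} The delicate part is the inductive step controlling how the profile $\tau$ propagates across groups through the $h$-induced edges, and in particular showing that the accumulated discretization/rounding errors do not overwhelm the constant slack $1-c$. Concretely I must show that after processing all deadlines in $U_t$, the ``surviving'' water (the portion of each $u_{t,i}$'s mass that was pushed onto $U_{t+1}$ rather than onto $V_t$) arranges the levels of $U_{t+1}$ into the shape $\tau(\frac{i-1}{k}) + O(1/k)$ again — this is a self-consistency condition and is exactly why $h = f \circ (c - f^{-1})$ was defined the way it was (the promised duality with $g$ in Appendix~\ref{sec:wtf_gh_dual}). Making this rigorous requires carefully identifying, at the moment of $u_{t,i}$'s deadline, which neighbors are at the ``global minimum'' water-level and how much total capacity they offer, then solving the resulting recurrence; I expect this to be the bulk of the work, with everything else (the $v$-vertex bound, the base case) following quickly from Lemma~\ref{lemma:property_of_h}.
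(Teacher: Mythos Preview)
Your plan is aimed at the right target but makes the problem harder than it is, and the core induction hypothesis you propose is not actually maintainable. The claim that at the start of group $U_t$ the passive levels already follow the profile $x_{u_{t,i}} \approx \tau(\frac{i-1}{k})$ is false for small $t$: as you yourself note, the base case $t=1$ has all levels zero, and one round of \wtf does not produce the $\tau$-profile on $U_2$ either. The profile only \emph{converges} to $\tau$ as $t\to\infty$, via the contraction argument that appears later in the proof of Theorem~\ref{th:wthard}. So to carry out your plan you would have to track the transient profile $\mathbf{p}_t$ and bound its maximum entry uniformly in $t$; your worry about error accumulation over $m$ groups is precisely the symptom of this missing step, not merely discretization noise.

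The paper's proof avoids all of it via a structural observation you did not isolate: because $h$ is monotone decreasing, the available-neighbor sets are nested, $N(u_{t,1}) \supseteq N(u_{t,2}) \supseteq \cdots \supseteq N(u_{t,k})$. An immediate induction on $i$ then shows that at the moment of $u_{t,i}$'s deadline all of its available neighbors share a single common water-level (they were all raised together in every previous step). In particular $u_{t+1,1}$, being a neighbor of every $u_{t,i}$, attains the maximum level among $U_{t+1}\cup V_t$, so it suffices to bound $p_{u_{t+1,1}}$ alone. Since each $u_{t,i}$ contributes at most $1/|N(u_{t,i})| = 1/\bigl(k-i+1+\lfloor k\,h(\tfrac{i-1}{k})\rfloor\bigr)$ to that level, one gets a Riemann sum converging to $\int_0^1 \frac{dy}{1-y+h(y)}<1$ by the third inequality of Lemma~\ref{lemma:property_of_h}. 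This bound is uniform in $t$---no tracking of $\tau$, no accumulation across groups, and the first identity of Lemma~\ref{lemma:property_of_h} is not invoked here at all (it enters only in Theorem~\ref{th:wthard}).
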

\begin{proof}
Observe that at the deadline of each $u_{t,i}$, where $t\in[m-1]$, it has $|N(u_{t,i}) \cap V_t| + |N(u_{t,i}) \cap U_{t+1}| = k-i+1+\lfloor k\cdot h(\frac{i-1}{k}) \rfloor$ neighbors whose deadlines are not reached. Moreover, as $h$ is decreasing, it is easy to see (by induction) that at the deadline of $u_{t,i}$, all available neighbors of $u_{t,i}$ have the same water-level.
Hence, \wtf increases the water-level of the available neighbors of $u_{t,i}$ at the same rate until $\min_{v\in N(u_{t,i})} x_v =1$ or $x_{u_{t,i}}=1$.

Since $u_{t+1,1}$ is a neighbor of every vertex in $U_t$, we have $p_{u_{t+1,1}} = \max_{j\in[k]} \{ p_{u_{t+1, j}},p_{v_{t, j}} \}$.
Therefore, it suffices to show that $p_{u_{t+1,1}}$ is smaller than $1$.
Note that each vertex $u_{t,i}$ has at most $1$ unit of unmatched portion that is distributed among $k-i+1+\lfloor k\cdot h(\frac{i-1}{k})\rfloor$ available neighbors and, thus, it increases the water-level of $u_{t+1,1}$ by at most $\frac{1}{k-i+1+\lfloor k\cdot h(\frac{i-1}{k})\rfloor}$.
Hence, when $k \to \infty$, we have
\begin{equation*}
p_{u_{t+1,1}} \le \sum_{i=1}^{k} \frac{1}{k-i+1+\lfloor k\cdot h(\frac{i-1}{k}) \rfloor} \to
\int_0^1 \frac{1}{1-y+h(y)} dy < 1,
\end{equation*}
where the last inequality follows from Lemma~\ref{lemma:property_of_h}. This finishes the proof.
\end{proof}

Lemma~\ref{lemma:passive_<_1} implies that, for large enough $k$, we can guarantee that when running \wtf on the hard instance, after the deadline of every $u_{t,i}$, where $t\in[m-1]$, we must have $x_{u_{t,i}} = 1$, as none of its neighbors with a later deadline has a water-level that reaches $1$.

\begin{corollary}\label{corollary:water-level-reaching-1}
	For all $t\in[m-1]$, we have $x_{u_{t,i}}=1$ after $u_{t,i}$'s deadline.
\end{corollary}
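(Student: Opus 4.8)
The plan is to read this off directly from Lemma~\ref{lemma:passive_<_1} together with the stopping rule of \wtf. Fix $t \in [m-1]$ and $i \in [k]$, and consider the iteration of \wtf that processes $u_{t,i}$'s deadline. By the algorithm's description, the inner \textbf{while} loop can only terminate when either $x_{u_{t,i}} = 1$ or $\min_{v \in N(u_{t,i})}\{x_v\} = 1$; I would show the second alternative is impossible once $k$ is large enough, which forces the first.

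To rule out the second alternative, I would exhibit one particular available neighbor of $u_{t,i}$ whose water-level stays strictly below $1$ throughout. The natural choice is $v_{t,i}$: the upper-triangle edge $(u_{t,i}, v_{t,i})$ is present (since $i \ge i$), and by the ordering of deadlines every $v$-vertex's deadline is reached after every $u$-vertex's, so $v_{t,i} \in N(u_{t,i})$ at the time $u_{t,i}$'s deadline is processed. Since water-levels are non-decreasing in time, the water-level of $v_{t,i}$ at the end of $u_{t,i}$'s deadline is at most its water-level just before its own (later) deadline, which is exactly $p_{v_{t,i}}$; and $p_{v_{t,i}} < 1$ by Lemma~\ref{lemma:passive_<_1} for large enough $k$. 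Hence $\min_{v \in N(u_{t,i})}\{x_v\} \le x_{v_{t,i}} < 1$ when the \textbf{while} loop stops, so it must have stopped because $x_{u_{t,i}} = 1$. Finally, water-levels never decrease and never exceed $1$, so $x_{u_{t,i}}$ remains equal to $1$ after $u_{t,i}$'s deadline. Ranging over all $i \in [k]$ gives the claim.

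The argument is essentially immediate; the only point that needs a moment's care is the interchange between ``$x_{v_{t,i}}$ at the end of $u_{t,i}$'s deadline'' and ``$x_{v_{t,i}}$ just before $v_{t,i}$'s own deadline'' — namely, that the deadlines processed in between can only raise $v_{t,i}$'s water-level, never lower it — which is what makes it legitimate to invoke Lemma~\ref{lemma:passive_<_1}, a statement phrased in terms of passive water-levels (measured right before each vertex's own deadline). No new ideas beyond Lemma~\ref{lemma:passive_<_1} are needed.
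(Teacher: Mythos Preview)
Your proposal is correct and matches the paper's own reasoning: the paper also derives the corollary directly from Lemma~\ref{lemma:passive_<_1} by noting that none of $u_{t,i}$'s available neighbors reaches water-level $1$, so the \textbf{while} loop must terminate with $x_{u_{t,i}}=1$. Your version is slightly more explicit in singling out the neighbor $v_{t,i}$ and in justifying the comparison between the water-level at $u_{t,i}$'s deadline and the passive water-level $p_{v_{t,i}}$, but the substance is identical.
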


Now we are ready to prove the main theorem of this section.

\begin{theorem} \label{th:wthard}
	\wtf is at most $(2-\sqrt{2})$-competitive.
\end{theorem}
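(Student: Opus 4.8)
The plan is to show that on the hard instance, the total value of the fractional matching produced by \wtf is at most $(2-\sqrt{2})\cdot km = (2-\sqrt{2})\cdot\opt$, by tracking water-levels through the group-by-group dynamic that the construction was designed to create. First I would use Corollary~\ref{corollary:water-level-reaching-1} to deduce that every $u_{t,i}$ with $t\in[m-1]$ ends with water-level exactly $1$, so $\sum_{(u,v)\in E} x_{uv} = \sum_v x_v$ is controlled by summing $x_v$ over all vertices. The value contributed by the last two groups and by the $v_{t,i}$'s that never get pumped is $o(km)$ (there are only $O(k)$ such vertices per group times a constant number of special groups), so it suffices to understand, for a typical middle group $t$, the profile of passive water-levels $\{p_{u_{t,i}}\}_i$ and $\{p_{v_{t,i}}\}_i$.

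The core inductive claim I would set up is that the water-level profile of group $t$ converges (as $k\to\infty$, and ignoring $\epsilon$-rounding from the floors) to a fixed shape described by the functions $f,\tau,h$: namely that at the moment just before the deadlines of group $t$, the vertex $u_{t,i}$ has passive water-level $\approx \tau(\frac{i-1}{k})$, and the available neighbors in $V_t\cup U_{t+1}$ all sit at a common level. When $u_{t,i}$'s deadline is processed, since $x_{u_{t,i}}$ rises all the way to $1$ (Corollary~\ref{corollary:water-level-reaching-1}) and it has $k-i+1+\lfloor k h(\frac{i-1}{k})\rfloor$ available neighbors all at the same level, it pumps each of them up by $\frac{1-p_{u_{t,i}}}{k-i+1+\lfloor k h((i-1)/k)\rfloor}$. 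Summing these increments over $i$ from $1$ down to the first index whose neighborhood includes a given target vertex $u_{t+1,j}$ (respectively $v_{t,j}$), and passing to the integral, the first identity of Lemma~\ref{lemma:property_of_h}, $\int_0^x \frac{1-\tau(y)}{1-y+h(y)}dy = c-\tau(x)$, is exactly what makes the profile of group $t+1$ reproduce the profile of group $t$ (shifted through $h$), while the profile of the $v_{t,j}$'s comes out to the complementary level $c-\tau(\cdot)$. This is the step I expect to be the main obstacle: carefully matching up which $u_{t,i}$'s are adjacent to which $u_{t+1,j}$ (the $h$-induced edges), checking that the "all available neighbors share a common level" invariant is preserved (as in the proof of Lemma~\ref{lemma:passive_<_1}), and controlling the accumulated error from the floor functions and from boundary groups so that it is $o(km)$.

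Once the profile is established, the final accounting is short. Each edge $(u_{t,i},v_{t,i})$ of the optimal matching has, at the time its fractional mass could be assigned (the deadline of $u_{t,i}$, since $v_{t,i}$ is passive), endpoints whose water-levels sum to $\approx \tau(\frac{i-1}{k}) + (c-\tau(\frac{i-1}{k})) = c = 2-\sqrt{2}$. More directly, the total value is $\frac{1}{2}\sum_v x_v$-free: writing $\mathrm{ALG} = \sum_v x_v - \tfrac12\cdot(\text{double count})$ is awkward, so instead I would bound $\mathrm{ALG} = \sum_{(u,v)\in E} x_{uv}$ by grouping mass by the passive vertex that receives it — every unit of fractional mass is assigned at some active vertex's deadline and lands on a passive vertex, and the passive water-levels, integrated, give $\sum_{t,i} p_{v_{t,i}} + \sum_{t,i}p_{u_{t,i}} \to km\cdot\big(\int_0^1(c-\tau(y))dy + \int_0^1\tau(y)dy\big)$ for the final water mass; but since every $u_{t,i}$ ($t<m$) reaches level $1$, the right primal quantity is obtained by noting $\opt=km$ and that the algorithm's matching value equals $\tfrac12\sum_v x_v$ only in the bipartite double-counting sense — cleanest is: the $u$-vertices in groups $1,\dots,m-1$ absorb total mass $(m-1)k$ at their deadlines, of which the fraction landing back "forward" onto later groups is accounted by $h$, and the net result after telescoping over all $m$ groups is that $\sum_{(u,v)\in E}x_{uv}\le (2-\sqrt{2})km + o(km)$. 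Using $\int_0^1\tau(y)dy = 1-c$ from Lemma~\ref{lemma:property_of_h} to evaluate the telescoped sum, and letting first $k\to\infty$ then $m\to\infty$, gives competitive ratio at most $2-\sqrt{2}$, completing the proof.
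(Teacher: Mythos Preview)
Your overall strategy matches the paper's: track the group-by-group passive water-level profile, identify $\tau$ as the limiting profile via the integral identity in Lemma~\ref{lemma:property_of_h}, and evaluate using $\int_0^1 \tau = 1-c$. However, two ingredients are either missing or muddled.

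First, you verify only that $\tau$ is a fixed point of the group-to-group map. That is not enough: the first group has $\mathbf{p}_1 = 0$, so you need an argument that $\mathbf{p}_t$ actually \emph{converges} to the fixed profile as $t\to\infty$ (the limit in $k$ controls discretization error, but not this). The paper handles this by observing that the transition is affine, $\mathbf{p}_{t+1}=\mathsf{M}(\mathbf{1}-\mathbf{p}_t)$, and that the row sums of $\mathsf{M}$ are uniformly below $1$ (this is exactly the third inequality of Lemma~\ref{lemma:property_of_h}, which you did not use). Hence $\mathsf{M}$ is a contraction, there is a unique fixed point $\mathbf{p}^*$, and $\mathbf{p}_t\to\mathbf{p}^*$. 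Without this, your ``inductive claim'' has no starting mechanism.

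Second, your accounting is tangled: you write $\sum_{(u,v)\in E}x_{uv}=\sum_v x_v$ (off by a factor of $2$), then attempt to sum final water-levels of both $u$- and $v$-vertices, then switch to a vague telescoping. The clean bookkeeping the paper uses is to sum \emph{increments at deadlines}: since each $v_{t,i}$ has no available neighbors at its deadline and, by Corollary~\ref{corollary:water-level-reaching-1}, each $u_{t,i}$ with $t<m$ raises the matching by exactly $1-p_{u_{t,i}}$, one gets $\mathrm{ALG}\le\sum_{t}(k-\|\mathbf{p}_t\|_1)$ directly. Dividing by $\opt=km$ and sending $m\to\infty$ (using the convergence just established) and then $k\to\infty$ gives the ratio $1-\int_0^1 \tau = c$. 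This single observation replaces all three of your attempted accountings and avoids ever touching the $v_{t,i}$ water-levels.
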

\begin{proof}
	Let $\mathbf{\pw}_t = (\pw_{u_{t,1}},\pw_{u_{t,2}},\ldots,\pw_{u_{t,k}})^\text{T}$ denote the passive water-level vector of $U_t$.
	Since the increment of matching at $u_{t,i}$'s deadline is at most $1-p_{u_{t,i}}$, the solution given by \wtf is
	\begin{equation*}
	\sum_{(u,v)\in E}x_{uv} \leq \sum_{t,i}(1-\pw_{u_{t,i}}) 
	= \sum_{t} (k-\|\mathbf{p}_t\|_1).
	\end{equation*}
	
	Indeed, by Corollary~\ref{corollary:water-level-reaching-1}, for all $t\in[m-1]$, the increment of matching at $u_{t,i}$'s deadline is exactly $1-p_{u_{t,i}}$.
	Recall that in the hard instance, $u_{t+1,i}$ is a neighbor of $u_{t,j}$ iff $\frac{i}{k} \leq h(\frac{j-1}{k})$.
	Hence we have
	\begin{equation*}
	\pw_{u_{t+1,i}} = \sum_{j=1}^{\lfloor k\cdot h^{-1}(\frac{i}{k})+1 \rfloor} \frac{1-p_{u_{t,j}}}{k-j+1+\lfloor k\cdot h(\frac{j-1}{k}) \rfloor}
	= \sum_{j=1}^{\lfloor k\cdot h^{-1}(\frac{i}{k})+1 \rfloor}(1-p_{u_{t,j}})\cdot a_j,
	\end{equation*}
	where $a_j = \frac{1}{k-j+1+\lfloor k\cdot h(\frac{j}{k}) \rfloor}$ is independent of $t$.
	In other words, there exists a $k\times k$ matrix $\mathsf{M}$ such that for all $t\in[m-1]$, $\mathbf{\pw}_{t+1} = \mathsf{M}(\mathbf{1}-\mathbf{\pw}_t)$.
	More precisely, we have $\mathsf{M}_{i,j} = a_j$ if $j\leq \lfloor k\cdot h^{-1}(\frac{i}{k})+1 \rfloor$, $\mathsf{M}_{i,j} = 0$ otherwise.
	Hence, for any $i\in[k]$, by Lemma~\ref{lemma:property_of_h}, we have
	\begin{equation*}
	\sum_{j\in [k]}\mathsf{M}_{i,j} \leq \sum_{j\in [k]}a_j < 1.
	\end{equation*}
	
	That is, $\mathsf{M}$ is a contraction matrix and the above mapping from $\mathbf{p}_{t}$ to $\mathbf{p}_{t+1}$ has a unique stationary vector $\mathbf{p^*}$, i.e. $\mathbf{p^*}= \mathsf{M}(\mathbf{1}-\mathbf{p^*})$. Moreover, $\lim_{t \to \infty}\mathbf{\pw}_t = \mathbf{p^*}$\footnote{Observe that $(\mathbf{\pw}_{t+1}-\mathbf{p^*})= \mathsf{M}(\mathbf{p^*}-\mathbf{\pw}_t)$ and $\mathsf{M}$ is a contraction matrix.}. Thus, for any fixed $k$, when $m\to \infty$, the ratio between the matching size of \wtf and the optimal is
	\[
	\lim_{m\to \infty} \frac{\sum_{t} (k-\|\mathbf{p}_t\|_1)}{m} = 1-\frac{1}{k}\cdot\|\mathbf{p^*}\|_1.
	\]
	
	Finally, we consider when $k\to \infty$ and calculate the stationary vector. 
	In this case, $\mathbf{p^*}$ becomes a function $p:[0,1]\to[0,1]$ and the linear equation $\mathbf{p^*}= \mathsf{M}(\mathbf{1}-\mathbf{p^*})$ becomes the following
	\[
	\int_0^{h^{-1}(x)} \frac{1-p(y)}{1-y+h(y)} dy = p(x), \quad \forall x \in [0,1].
	\]
	
	We verify that $p=\tau$ is a solution to this system of equations by Lemma~\ref{lemma:property_of_h}. For all $x$, we have
	\begin{align*}
		\int_0^{h^{-1}(x)} \frac{1-\tau(y)}{1-y+h(y)} dy	=  c-\tau(h^{-1}(x))
		= \tau\left( f\left(c-\tau(h^{-1}(x))\right) \right) = \tau\left(h\left(h^{-1}(x)\right)\right) = \tau(x).
	\end{align*}
	
	Thus, the ratio between \wtf and \opt \ is $1-\int_0^1 \tau(y) dy = c = 2 - \sqrt{2}$.
\end{proof}

Interestingly, we show that our hardness result applies to the edge-arrival models of the online matching problems.
In the \textsf{Online Edge Arrival Matching} problem~\cite{esa/BuchbinderST17}, at each step, an edge arrives online and the algorithm must irrevocably decide whether to add the edge to the matching; in the preemptive setting (\textsf{Online Preemptive Matching}~\cite{stacs/EpsteinLSW13, approx/McGregor05}), instead, we are allowed to dispose of edges in the matching before accepting a new edge.

\begin{corollary} \label{cor:hard_other}
	No algorithm can be better than $(2-\sqrt{2})$-competitive for \textsf{Online Edge Arrival Matching} and \textsf{Online Preemptive Matching}, even if fractional matching is allowed.
\end{corollary}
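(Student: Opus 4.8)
The plan is to exhibit the graph of Theorem~\ref{th:wthard} as an instance of \textsf{Online Edge Arrival Matching} and argue that on it \emph{no} online fractional algorithm, not even a preemptive one, does better than what \wtf already does, i.e.\ better than $(2-\sqrt2+o(1))\cdot\opt$. First I would fix the edge-arrival order: reveal the edges incident to $u_{t,i}$ in one consecutive block (the ``batch'' of $u_{t,i}$), ordering the batches lexicographically in $(t,i)$; this is exactly the order in which \wtf first touches those edges. The key structural observation, which drives everything, is that because $h$ is decreasing, at the moment the batch of $u_{t,i}$ is presented, \emph{every} newly revealed neighbour of $u_{t,i}$ --- both the staircase targets $v_{t,i},\dots,v_{t,k}$ and the cross targets $u_{t+1,1},\dots,u_{t+1,\lfloor k\,h((i-1)/k)\rfloor}$ --- has so far received exactly the edges from $u_{t,1},\dots,u_{t,i-1}$ and nothing else. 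Hence all of $u_{t,i}$'s new neighbours lie in a single ``symmetry class'', and an online algorithm cannot distinguish the $V_t$-vertices (one of which is $u_{t,i}$'s optimal partner) from the $U_{t+1}$-vertices (into which mass merely ``leaks'').

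Using this, I would invoke Yao's principle together with a standard symmetrization: run the algorithm against the hard instance with a uniformly random relabelling of the vertices within each such symmetry class, so that it suffices to lower bound the expected value of a deterministic algorithm, and such an algorithm may be assumed symmetric. A symmetric algorithm must, at the batch of $u_{t,i}$, split whatever total mass $\sigma_{t,i}$ it decides to place there \emph{evenly} over all $k-i+1+\lfloor k\,h((i-1)/k)\rfloor$ new edges --- precisely \wtf's allocation rule, with the only extra freedoms that it may ``underspend'' ($\sigma_{t,i}<1-\pw_{u_{t,i}}$) and that it may later preempt. I then want to show these freedoms are useless. Since every edge's mass is placed in exactly one batch, the value produced equals $\sum_{t,i}\sigma_{t,i}$ minus whatever is preempted; and the even split makes the passive water-level vectors evolve by $\mathbf p_{t+1}=\mathsf M\,\mathbf s_t$ with $\mathbf s_t=(\sigma_{t,1},\dots,\sigma_{t,k})^{\mathrm T}$ and the \emph{same} matrix $\mathsf M$ as in the proof of Theorem~\ref{th:wthard}, subject to $\mathbf s_t\le\mathbf 1-\mathbf p_t$ (and target-capacity constraints, which by a Lemma~\ref{lemma:passive_<_1}-type bound are non-binding for large $k$). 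Because $\mathsf M$ is a contraction (row sums $<1$, by Lemma~\ref{lemma:property_of_h}), an exchange argument shows that placing mass maximally at every batch --- the \wtf choice $\sigma_{t,i}=1-\pw_{u_{t,i}}$ --- is optimal: shifting $\epsilon$ of mass earlier gains $\epsilon$ immediately at a cost in later groups that is damped geometrically by $\mathsf M$. A parallel exchange argument rules out preemption: releasing mass on a $V_t$-edge is pure waste, since $V_t$-vertices have capacity $1$ and no further edges after group $t$, while releasing a cross-edge only trades mass one-for-one against later leakage. Thus \wtf is an optimal online (fractional, possibly preemptive) algorithm on this instance, so its ratio $2-\sqrt2$ from Theorem~\ref{th:wthard} (obtained by letting $k,m\to\infty$ and passing to the integral equation via Lemma~\ref{lemma:property_of_h}) is an upper bound for all algorithms. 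Finally, \textsf{Online Edge Arrival Matching} without disposals is a restriction of the preemptive model, and nothing in the construction uses integrality, so the same $2-\sqrt2$ bound holds for all of these models and for fractional algorithms.

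I expect the main obstacle to be making ``underspending and preemption do not help'' fully rigorous. The reduction value $=\sum_{t,i}\sigma_{t,i}$ uses that the last group has no outgoing cross edges, and the exchange argument has to track how an $\epsilon$-perturbation at batch $(t,i)$ propagates through the contraction $\mathsf M$ to \emph{all} later groups and sum the effect; for preemption one additionally has to rule out the seemingly clever strategy of underspending early to keep later $U$-vertices unloaded, using that the even split forces leakage whenever any mass is placed (so no $V_t$ can be fully matched, ruling out a ratio-$1$ escape). Secondary technical points are pinning down the symmetry classes of each arrival prefix cleanly enough to justify the even split, and the $k\to\infty$ passage to the integral equation, which should mirror the proof of Theorem~\ref{th:wthard} verbatim.
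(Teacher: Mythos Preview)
Your approach is essentially the paper's: present the hard instance in edge-arrival order, observe that at each $u_{t,i}$-batch all available neighbours have seen exactly the same set of prior edges and are therefore indistinguishable, symmetrize via random identities (Yao) to force the even allocation, and conclude that no algorithm outperforms \wtf on this instance. The paper is far terser than your plan---it dispatches preemption in one line (no edge incident to a vertex arrives after its batch, so disposal can only free capacity that will never be used) and then simply \emph{asserts} that after symmetrization no algorithm beats \wtf in expectation, without the exchange/backward-induction argument you propose to rule out underspending; if you do carry that argument out, note that the quantity you need to be below $1$ is the \emph{column} sums of $\mathsf{M}$ (approximately $h(y)/(1-y+h(y))<1$), not the row sums you cite from Lemma~\ref{lemma:property_of_h}.
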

\begin{proof}
	Since the edge arrival model (resp. integral matching) is strictly harder than the preemptive model (resp. fractional matching), it suffices to consider the second model with fractional matching.
	Consider the previous hard instance with the following modifications.
	The underlying graph remains the same and each vertex is associated with the same deadline as before.
	At $u_{t,i}$'s deadline, its incident edges with available neighbors are revealed one by one.
	In this way, all available neighbors of $u_{t,i}$ are indistinguishable at this moment, i.e. they share the same set of neighbors.
	Thus by assigning random identities to these vertices, the available neighbors of $u_{t,i}$ have the same expected increment in matched fraction. 
	Moreover, since no edge incident to each vertex comes after its deadline, it is not beneficial for an algorithm to dispose of previously chosen edges. Therefore, no algorithm can do better than \wtf in expectation and the lower bound $2-\sqrt{2}$ applies.
\end{proof}

\section{Tight Competitive Ratio of \ranking on Bipartite Graphs} \label{sec:bipartite}

Let $\Omega\approx 0.5671$ denote the Omega constant, which is the solution for the equation $\Omega\cdot e^\Omega = 1$.
In this section, we prove that \ranking is $\Omega$-competitive for the \fobmp, matching the $\Omega$ hardness result given by Huang et al.~\cite{stoc/HKTWZZ18}.

\begin{theorem}
	\label{thm:ranking_bip}
\ranking is $\Omega$-competitive for \fobmp. 	
\end{theorem}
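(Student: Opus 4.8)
The plan is to apply the randomized primal--dual framework of Lemma~\ref{lemma:dual_fitting}. I would set the primal variables to the indicator of the matching $M(\vecy)$ produced by \ranking, and fix an increasing gain-sharing function $g\colon[0,1]\to[0,1]$ to be pinned down only at the very end. Whenever \ranking matches an edge with active endpoint $a$ and passive endpoint $p$, set $\alpha_a = 1-g(y_p)$ and $\alpha_p = g(y_p)$, and give every unmatched vertex dual value $0$. Each matched edge then contributes exactly $(1-g(y_p))+g(y_p)=1$ to $\sum_{u}\alpha_u$, and each vertex lies in at most one matched edge, so condition~(1) of Lemma~\ref{lemma:dual_fitting} holds for every realization of $\vecy$, while non-negativity is immediate because $g$ maps into $[0,1]$. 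Everything reduces to condition~(2): $\expect{\vecy}{\alpha_u+\alpha_v}\ge\Omega$ for each edge $(u,v)\in E$.

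Fix such an edge with $u$'s deadline before $v$'s and condition on the ranks of all vertices other than $u$ and $v$; it then suffices to lower bound $\expect{y_u,y_v}{\alpha_u+\alpha_v}$ by $\Omega$. Refining the case analysis of Huang et al.~\cite{stoc/HKTWZZ18}, I would describe the joint fate of $u$ and $v$ as a function of $(y_u,y_v)$ via three thresholds, all obtained by repeated use of the alternating-path lemma (Lemma~\ref{lemma:alternating_path}): a marginal rank $\theta$ of $u$ --- for $y_u<\theta$, $u$ is passive, so $\alpha_u=g(y_u)$, and for $y_u\ge\theta$, $u$ is active at its deadline --- together with the marginal rank $\beta$ of $v$ in $G$ and the marginal rank $\beta'$ of $v$ in $G-u$, where $\beta'\le\beta$ because deleting $u$ can only hurt its neighbor $v$. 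The crucial point absent from the single-threshold argument of~\cite{stoc/HKTWZZ18} is that these thresholds pin down exactly the region of $(y_u,y_v)$ in which $u$ matches $v$; there $\alpha_u+\alpha_v=(1-g(y_v))+g(y_v)=1$, an ``extra gain'' that the old analysis discards because it (correctly) never occurs with non-negligible probability in simpler online matching models, yet this is precisely what makes the hard instance of~\cite{stoc/HKTWZZ18} tight.

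Partitioning the unit square according to $\theta$, $\beta$, $\beta'$ and bounding $\alpha_u+\alpha_v$ in each piece from below by the appropriate quantity --- $g(y_u)$ when $u$ is passive, $g(y_v)$ when $v$ is passive, the exact value $1$ when $u$ matches $v$, and a $\big(1-g(\cdot)\big)$-term on a rank pinned down by the thresholds otherwise --- yields a lower bound on $\expect{y_u,y_v}{\alpha_u+\alpha_v}$ that is an explicit functional of $g$, $\theta$, $\beta$, $\beta'$, and one fixed partner rank, and one checks that this functional strictly beats the previous $\approx 0.554$ bound outside a few bottleneck threshold configurations. It then remains to choose $g$ so that the minimum of this functional over all admissible parameters equals $\Omega$. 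The main obstacle is exactly this last stage together with the structural claims underneath it: deriving the three-threshold description rigorously from Lemma~\ref{lemma:alternating_path} (in particular the monotonicity giving $\beta'\le\beta$ and the characterization of when $u\leftrightarrow v$), and then exhibiting a single $g$ --- which I expect to be built around $e^{z-1}$ but modified on part of its domain to cater to the bottleneck cases, with the constant $\Omega$ surfacing from the identity $\Omega e^{\Omega}=1$ (equivalently $\ln(1/\Omega)=\Omega$) --- for which condition~(2) holds uniformly over the whole multi-parameter family. Unlike in the classical \obmp analysis this is a delicate feasibility check rather than the solution of a clean ODE; the genuinely degenerate regimes (e.g.\ $u$ left unmatched, which forces $v$ to be passively matched before $u$'s deadline) are handled separately and are comparatively routine, since there $\alpha_u+\alpha_v$ collapses to a single $g(\cdot)$ term over a controlled rank interval.
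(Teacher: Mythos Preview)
Your high-level plan is the paper's: randomized primal-dual with that gain-sharing rule, condition on $\vecmv[uv]$, decompose the $(y_u,y_v)$-square via three thresholds so as to isolate the event ``$u$ matches $v$'' and credit its full contribution of $1$, then minimize over the thresholds and pick $g$. Two points need correcting before the execution goes through.

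First, your thresholds must be defined in the right graphs to be well-defined given only $\vecmv[uv]$. The paper takes $\tau$ to be the marginal rank of $u$ in $G-v$ (not in $G$, where it would depend on $y_v$) and $\gamma$ the marginal rank of $v$ in $G-u$; the third threshold is the marginal rank $\theta(y_u)$ of $v$ in $G$, and a separate argument shows $\theta(y_u)$ is constant on $\{y_u>\tau\}$. Your $(\theta,\beta,\beta')$ line up with $(\tau,\theta,\gamma)$ after this fix, and your $\beta'\le\beta$ is the paper's $\gamma\le\theta$.

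Second, your guess that $g$ is ``built around $e^{z-1}$'' is wrong and will not close the argument; the optimal $g$ here is piecewise rational, not exponential:
\[
g(y)=\begin{cases}\dfrac{c}{1-y}, & y<\dfrac{1-2c}{1-c},\\[1ex] 1-c, & \dfrac{1-2c}{1-c}\le y<1,\end{cases}
\qquad c=\dfrac{1}{1+e^{\Omega}}.
\]
With this $g$ one first shows the minimum in $\tau$ occurs only where $g(\tau)\ge 1-g(\theta)$, then that the worst $\theta$ sits at the kink $\frac{1-2c}{1-c}$, after which the bound collapses to $-c\ln(1-\tau)-c\ln(1-\gamma)+(1-c)(1-\tau)(1-\gamma)$, whose minimum in $(1-\tau)(1-\gamma)$ is $c-c\ln\!\big(\tfrac{c}{1-c}\big)=c(1+\Omega)=\Omega$. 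The identity $\Omega e^{\Omega}=1$ enters through the constant $c$, not through an exponential $g$.
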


We adopt the randomized primal dual analysis from \cite{stoc/HKTWZZ18}.
Recall the dual assignment that distributes the gain of each matched edge between its two endpoints as follows.
\begin{itemize}
	\item \emph{Gain Sharing:} Whenever a pair $(u, v)$ is matched with $u$ being active and $v$ being passive, let $\alpha_u = 1 - g(y_v)$ and $\alpha_v = g(y_v)$, where $g : [0,1] \rightarrow [0,1]$ is non-decreasing, and $g(1) = 1$.
\end{itemize}

By Lemma~\ref{lemma:dual_fitting}, it suffices to prove that $\expect{\vecy}{\alpha_u+\alpha_v} \ge \Omega$ for all pairs of neighbors $u,v$.
Suppose $u$ has an earlier deadline than $v$ and $\vecmv$ is the rank vector of all vertices excluding $v$. Let $\theta$ be the marginal rank of $v$. The following lemma lies in the central of the proof by \cite{stoc/HKTWZZ18}.

\begin{fact}[\cite{stoc/HKTWZZ18}, Lemma 3.2] 	\label{fact:ranking}
	For any arbitrarily fixed $\vecmv$, we have
	\begin{equation*}
	\expect{y_v}{\alpha_u +\alpha_v} \ge \min_{\theta\in[0,1]} \left\{ \int_{0}^{\theta}g(y_v)dy_v + \min \left\{1-g(\theta), g(y_u) \right\} \right\}.
	\end{equation*}
\end{fact}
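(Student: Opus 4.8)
The plan is to prove the inequality with $\theta$ taken to be the \emph{actual} marginal rank of $v$ with respect to $\vecmv$ — note that $y_u$ is a coordinate of $\vecmv$ and hence fixed throughout — since replacing $\theta$ by the minimiser over $[0,1]$ on the right-hand side only weakens the claim. I would split the target $\expect{y_v}{\alpha_u+\alpha_v} \ge \int_0^\theta g(y)\,dy + \min\{1-g(\theta),g(y_u)\}$ into a contribution from $v$ and one from $u$. The $v$-part is immediate from the definition of the marginal rank: for every $y_v<\theta$ the vertex $v$ is passive in $M(y_v,\vecmv)$, so the gain-sharing rule sets $\alpha_v=g(y_v)$ there, while $\alpha_v\ge 0$ always; integrating gives $\expect{y_v}{\alpha_v}\ge\int_0^\theta g(y)\,dy$. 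It then remains to show $\expect{y_v}{\alpha_u}\ge\min\{1-g(\theta),g(y_u)\}$, and I would establish this pointwise in $y_v$ in all but one case.

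For the $\alpha_u$ bound, fix $y_v$ and compare the run $M(y_v,\vecmv)$ with the run $M(\vecmv)$ on the graph with $v$ removed. Applying Lemma~\ref{lemma:alternating_path} with $v$ as the deleted vertex: if $v$ is matched in $M(y_v,\vecmv)$, then $u$'s status in $M(y_v,\vecmv)$ is no worse than its status in $M(\vecmv)$ in the order (unmatched $<$ active $<$ passive, refined by partner deadline for passive states and partner rank for active states); and if $v$ is unmatched in $M(y_v,\vecmv)$, then a short induction shows the two runs are literally identical, so the comparison holds trivially. I would then case on the role of $u$ in $M(\vecmv)$. If $u$ is passive there, then $u$ is passive in $M(y_v,\vecmv)$ and $\alpha_u=g(y_u)\ge\min\{1-g(\theta),g(y_u)\}$. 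If $u$ is active there, matched to some $w_0$ of rank $y_{w_0}$, then in $M(y_v,\vecmv)$ the vertex $u$ is either passive (so $\alpha_u=g(y_u)$) or active to a vertex of rank at most $y_{w_0}$ (so $\alpha_u\ge 1-g(y_{w_0})$, using that $g$ is non-decreasing); so it suffices to know $y_{w_0}\le\theta$. Finally, if $u$ is unmatched in $M(\vecmv)$, I would show the marginal rank is forced to be $\theta=1$, so that $\min\{1-g(\theta),g(y_u)\}=\min\{0,g(y_u)\}=0$ and the claimed bound for this case follows from $\expect{y_v}{\alpha_u}\ge 0$ together with $\expect{y_v}{\alpha_v}=\int_0^1 g(y)\,dy$.

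What is left are the two structural claims — that $y_{w_0}\le\theta$ when $u$ is active in $M(\vecmv)$, and that $\theta=1$ when $u$ is unmatched in $M(\vecmv)$ — and I expect these to be the main obstacle. In both the intuition is the same: inserting $v$ with a small rank $y_v$ can only help $v$, and the vertex $u$ ``would have taken'' in the $v$-free run ($w_0$, respectively any neighbour of $u$, all of which are saturated when $u$ is unmatched) gets displaced along an alternating path in favour of $v$, forcing $v$ to be passive; hence $v$ is passive for all $y_v$ below $y_{w_0}$ (respectively below $1$), which is exactly what the two claims assert. The delicate point is that $u$ need not literally match $v$ at its deadline — $v$'s insertion may instead free up a lower-ranked neighbour that $u$ grabs — so making the argument rigorous requires tracking the alternating path, i.e.\ essentially re-running the induction behind Lemma~\ref{lemma:alternating_path} with $v$'s rank as a parameter, or invoking a rank-parametrised strengthening of it. Once both claims are in hand, adding the $\alpha_u$ and $\alpha_v$ bounds gives $\expect{y_v}{\alpha_u+\alpha_v}\ge\int_0^\theta g(y)\,dy+\min\{1-g(\theta),g(y_u)\}$, and minimising over $\theta$ finishes the proof; the subsequent choice of $g$, run through Lemma~\ref{lemma:dual_fitting}, is what will pin down the competitive ratio $\Omega$.
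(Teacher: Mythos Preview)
The paper does not supply its own proof of this statement: it is recorded as a Fact imported from \cite{stoc/HKTWZZ18}, Lemma~3.2, and quoted only as a baseline to be \emph{improved upon}, so there is no in-paper argument to compare yours against.

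That said, your outline is sound and essentially reconstructs the original argument. The two structural claims you flag as the main obstacle are in fact short once you examine the run with $y_v=\theta^+$: there $v$ is not passive, hence $v$ is unmatched at $u$'s deadline, hence (by your own ``runs are literally identical'' induction) the matching up to $u$'s deadline coincides with $M(\vecmv)$, so $u$ is in exactly the same state as in $M(\vecmv)$ but now with $v$ as an extra available neighbour of rank just above $\theta$. If $u$ was active to $w_0$ in $M(\vecmv)$ it still prefers $w_0$ only if $y_{w_0}\le\theta$, giving your first claim; if $u$ was unmatched it now matches $v$, making $v$ passive and contradicting $y_v>\theta$ unless $\theta=1$, giving your second claim. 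No further alternating-path chase is required.

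One correction to your closing sentence: Fact~\ref{fact:ranking} does \emph{not} pin down the ratio $\Omega$. The paper states explicitly that optimising $g$ against this bound cannot beat $0.56$; the tight $\Omega\approx 0.567$ comes from the refined three-threshold lower bound of Lemma~\ref{lemma:bipartite_main}, which additionally captures the gain from the event that $u$ and $v$ match each other.
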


Our main technical contribution is an improved version of the above lower bound.
Indeed, using Fact~\ref{fact:ranking} as a lower bound, one cannot achieve a competitive ratio greater than $0.56$ by just optimizing $g$.%
\footnote{The function $g$ is not optimized in \cite{stoc/HKTWZZ18} with respect to their lower bound. However, the ratio is less than $0.56$ with the optimal $g$ function.}
In the following, we will first illustrate how this lower bound can be improved for the hard instance given in~\cite{stoc/HKTWZZ18}.
Then, we show in Section~\ref{subsec:rank_complete} how to prove the $\Omega$ competitive ratio for general instances.

\subsection{Better Competitive Ratio for the Hard Instance}
\label{sec:ranking_instance}
Recall the following hard instance for \ranking that is given by~\cite{stoc/HKTWZZ18}.
In the instance (refer to Figure~\ref{figure:ranking_hard_instance}), the vertices are organized into (infinitely many) groups of size $2k$, where each group $U_t\cup V_t$ induces a perfect matching.
For all $t\in[m-1]$, the vertices $U_t$ and $U_{t+1}$ are connected by a complete bipartite graph.
The deadline of every $u_{t,i}$ is earlier than $v_{t,i}$, and deadlines of $u_{t,i}$ follow the lexicographic order on $(t,i)$.

\begin{figure}[htp]
	\centering
	\resizebox*{45mm}{!}
	{
	\begin{tikzpicture}
	[font=\bf,line width=0.3mm,scale=.8,auto=left,every node/.style={circle,draw}]
	\def \n {4}
	\def \X {3}
	\node at (1,1.3*\n+0.5) (prev) {Prev};
	\node at (6,1.3*\n+0.5) (prevv) {Prev};
	\foreach \i in {1,...,\n}
	{
		\node at (1,1.3*\n-1.3*\i) (u\i) {$u_{t,\i}$};
	};
	\foreach \i in {1,...,\n}
	{
		\node at (6,1.3*\n-1.3*\i) (v\i) {$v_{t,\i}$};
		\draw (u\i) to (v\i);
	};
	\node at (1,-1.3-0.5) (next) {Next};
	\node at (6,-1.3-0.5) (nextv) {Next};
	\draw (prevv) to (prev);
	\draw (next) to (nextv);
	\foreach \i in {1,...,\n}
	{
		\draw[out=135,in=225] (u\i) to (prev);
		\draw[out=135,in=225] (next) to (u\i);
	};
	\end{tikzpicture}
	}
	\caption{Hard instance of \ranking: illustrating example with $k=4$.}
	\label{figure:ranking_hard_instance}
\end{figure}
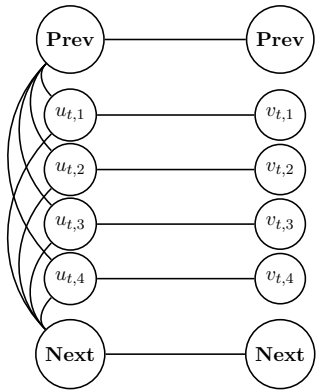

It is shown in~\cite{stoc/HKTWZZ18} that when running \ranking on the above instance,
at the deadline of the first vertex of each group, e.g., $u_{t,1}$, the expected fraction of unmatched vertices in $U_t$ (which is also the competitive ratio of \ranking) is given by the equation $x = e^{-x}$. In other words, the competitive ratio of \ranking is $\Omega$ on the above instance (when $k\rightarrow \infty$).

In the following, we show that the competitive ratio of \ranking is $\Omega$, using the randomized primal dual framework, and explain what is missing in the previous analysis.
Fix any pair of neighbors $u,v$ in the same group s.t.\ $u$ has an earlier deadline than $v$.
Next, we fix the ranks of all vertices but $v$ arbitrarily, and lower bound $\expect{y_v}{\alpha_u +\alpha_v}$ for any edge $(u, v)$ that appears in the perfect matching\footnote{Note that the competitive ratio equals $\sum_{u} \expect{}{\alpha_u} = \sum_{(u,v) \text{ appears in the perfect matching}} \expect{}{\alpha_u + \alpha_v}$.}.

Observe that $u$ is the only neighbor of $v$.
If $u$ is passive, then $v$ is unmatched regardless of $y_v$, which implies $\expect{y_v}{\alpha_u+\alpha_v}=g(y_u)$.
Otherwise, let $\theta$ be the marginal rank of $v$. By definition, when $y_v > \theta$, $u$ matches a vertex with rank $\theta$ and hence $\alpha_u=1-g(\theta)$. 
For the case when $y_v < \theta$, it is shown in~\cite{stoc/HKTWZZ18} (using Lemma~\ref{lemma:alternating_path}) that $u$ does not get worse: $u$ either is passive, or actively matches a vertex with rank at most $\theta$. That is, $\alpha_u \ge \min \{g(y_u), 1-g(\theta)\}$ when $y_v < \theta$.
However, for the specific hard instance given in Figure~\ref{figure:ranking_hard_instance}, $u$ is $v$'s only neighbor.
Hence, $u$ and $v$ will match each other when $y_v < \theta$.
Therefore, we have 
\[
\expect{y_v}{\alpha_u + \alpha_v} = \int_0^\theta (\alpha_u + \alpha_v) d y_v  + \int_\theta^1 \alpha_u d y_v = \theta + (1-\theta) \cdot (1-g(\theta)).
\]
Together with the case when $u$ is passive, we have that
\[
\expect{y_v}{\alpha_u+\alpha_v} \ge f(y_u) \eqdef \min \left\{ g(y_u), \min_{\theta \in [0,1]}\{\theta + (1-\theta) \cdot (1-g(\theta))\} \right\}.
\]
This bound is strictly stronger than Fact~\ref{fact:ranking}, as we fully characterize the gain of $\alpha_u$ when $y_v$ is smaller than its marginal rank, rather than the loose lower bound $\min \{g(y_u), 1-g(\theta)\}$ given in~\cite{stoc/HKTWZZ18}. By taking expectation over $y_u$ and optimizing the function $g(\cdot)$ (see Section~\ref{subsec:rank_complete}), the above lower bound implies that \ranking is $\Omega$-competitive on the hard instance.

In general, $v$ does not necessarily match $u$ when $y_v < \theta$.
However, when this fails to happen, we are able to retrieve extra gain of $\alpha_v$ when 
$u$ is passive. (Recall that in the hard instance, $v$ is unmatched when $u$ is passive.) The complete analysis involves a more careful treatment that considers the randomness of $y_u,y_v$ at the same time, when deriving the lower bound.

\subsection{Proof of Theorem~\ref{thm:ranking_bip}} \label{subsec:rank_complete}

Consider any neighboring vertices $u$ and $v$.
In the following, we fix an arbitrary assignment of ranks to all vertices but $u,v$. We denote this assignment of ranks by $\vecmv[uv]$.
Unless otherwise specified, we use $\expect{}{\cdot}$ to denote the expectation taken over the randomness of $y_u$ and $y_v$.

Instead of using a single threshold $\theta$ of $v$ as in the previous analysis, we will make use of multiple thresholds to give a good enough characterization of the matching status of $u$ and $v$ in order to derive the tight competitive ratio.
We introduce the first two below.

\begin{definition}[$\tau$ and $\gamma$]
	Consider the graph $G-\{v\}$ with $v$ removed.
	Let $\tau$ be the marginal rank of $u$ w.r.t. $\vecmv[uv]$.
	In other words, $u$ is passive iff $M(y_u < \tau, \vecmv[uv])$.
	Similarly, let $\gamma$ be the marginal rank of $v$ w.r.t. $\vecmv[uv]$ in graph $G-\{u\}$, i.e., with $u$ removed.
\end{definition}

\begin{lemma}\label{lemma:above}
	$\expect{}{\alpha_u \cdot\onev(y_u < \tau) + \alpha_v\cdot \onev(y_v < \gamma)} = \int_0^\tau g(y_u)dy_u 
	+ \int_0^\gamma g(y_v)dy_v$
\end{lemma}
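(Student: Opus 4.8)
The plan is to observe that the event $\{y_u < \tau\}$ in the graph $G - \{v\}$ and the event $\{y_v < \gamma\}$ in the graph $G - \{u\}$ are genuinely about two disjoint pieces of randomness once the other vertex is stripped away, so that the two contributions can be handled independently and then combined. Concretely, I would first argue that the quantity $\alpha_u \cdot \onev(y_u < \tau)$ should be interpreted with respect to the run of \ranking on $G - \{v\}$: since $\tau$ is the marginal rank of $u$ in $G - \{v\}$, whenever $y_u < \tau$ the vertex $u$ is passive in $M(\vecmv[uv])$ (with $v$ absent), and it is \emph{not} passive when $y_u \ge \tau$. Thus $\alpha_u \cdot \onev(y_u < \tau)$ captures exactly the passive gain of $u$, and by the Gain Sharing rule a passive $u$ matched to some active neighbor $w$ with rank $y_w$ contributes $\alpha_u = g(y_w)$. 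The symmetric statement holds for $v$ in $G - \{u\}$ with threshold $\gamma$.

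The key step is then to evaluate $\expect{}{\alpha_u \cdot \onev(y_u < \tau)}$. I would condition on $\vecmv[uv]$ being fixed (as the statement already does) and additionally integrate out $y_v$ — but crucially, in the graph $G - \{v\}$ the rank $y_v$ plays no role whatsoever, and $\tau$ is a function of $\vecmv[uv]$ only, so $\expect{}{\alpha_u\cdot \onev(y_u<\tau)}$ reduces to an expectation over $y_u$ alone. When $u$ is passive it is matched to its best available neighbor in $G-\{v\}$, and a standard monotonicity fact (the same one underlying Fact~\ref{fact:ranking} / Lemma~\ref{lemma:alternating_path} in \cite{stoc/HKTWZZ18}) says that as $y_u$ increases from $0$ to $\tau$, the rank of the vertex $u$ passively matches to traces out exactly the interval — more precisely, for $y_u$ ranging over $[0,\tau)$ the matched partner's rank, call it $r(y_u)$, satisfies $g(r(y_u))$ integrating to $\int_0^\tau g(y_u)\, dy_u$. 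The cleanest way to see this is the classic substitution argument from the randomized primal–dual analysis of \ranking: the map $y_u \mapsto r(y_u)$ on $[0,\tau)$ is measure-preserving onto $[0,\tau)$ in the appropriate sense, or one invokes directly the identity from \cite{stoc/HKTWZZ18} that $\expect{y_u}{\alpha_u\cdot\onev(y_u<\tau)} = \int_0^\tau g(y_u)\, dy_u$ for the passive portion. Applying the identical reasoning to $v$ in $G-\{u\}$ gives $\expect{y_v}{\alpha_v\cdot\onev(y_v<\gamma)} = \int_0^\gamma g(y_v)\, dy_v$.

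Finally I would combine the two: because $\alpha_u\cdot\onev(y_u<\tau)$ depends only on $y_u$ (with $\vecmv[uv]$ fixed) and $\alpha_v\cdot\onev(y_v<\gamma)$ depends only on $y_v$, linearity of expectation immediately yields
\[
\expect{}{\alpha_u \cdot\onev(y_u < \tau) + \alpha_v\cdot \onev(y_v < \gamma)} = \int_0^\tau g(y_u)\,dy_u + \int_0^\gamma g(y_v)\,dy_v,
\]
as claimed. The main obstacle I anticipate is making the measure-preserving / substitution argument rigorous — namely establishing that summing $\alpha_u$ over the regime $y_u<\tau$ really integrates $g$ over $[0,\tau)$. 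This is exactly the heart of the classical \ranking analysis, and here it must be invoked for the \emph{subgraph} $G-\{v\}$ with $u$ playing the role of the ``online'' vertex whose rank is resampled; one has to be careful that ``passive'' in $M(\vecmv[uv])$ restricted to $G-\{v\}$ is the right notion and that the marginal-rank structure carries over verbatim. Once that is granted (it is essentially Lemma 3.2's proof in \cite{stoc/HKTWZZ18} applied without the vertex $v$), the rest is bookkeeping.
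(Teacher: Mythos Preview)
Your proposal has a genuine gap stemming from a misreading of the Gain Sharing rule. You write that ``a passive $u$ matched to some active neighbor $w$ with rank $y_w$ contributes $\alpha_u = g(y_w)$.'' That is backwards: in the rule, when an edge is matched with active endpoint $a$ and passive endpoint $p$, it is the \emph{passive} vertex that receives $g$ of its \emph{own} rank, i.e.\ $\alpha_p = g(y_p)$, while the active vertex gets $1 - g(y_p)$. So if $u$ is passive, then simply $\alpha_u = g(y_u)$, regardless of who matched it. Once this is corrected, the whole ``measure-preserving / substitution'' machinery you anticipate is unnecessary --- and in fact that machinery is not even correct: as $y_u$ ranges over $[0,\tau)$, the rank of $u$'s active partner need not sweep out $[0,\tau)$ in any measure-preserving way (often the partner is a single fixed vertex with a single fixed rank for all small $y_u$).

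There is a second, related issue. You treat $\alpha_u \cdot \onev(y_u<\tau)$ as if it were computed in $G-\{v\}$, asserting that ``in the graph $G-\{v\}$ the rank $y_v$ plays no role whatsoever.'' But $\alpha_u$ in the lemma is the dual variable in the \emph{full} graph $G$, which a priori can depend on $y_v$. The paper's actual argument is: $\tau$ guarantees $u$ is passive in $G-\{v\}$ whenever $y_u<\tau$; then by Lemma~\ref{lemma:alternating_path}, inserting $v$ (with any $y_v$) cannot make $u$ worse, so $u$ is still passive in $G$. Hence $\alpha_u = g(y_u)$ for all $y_v$, and integrating over $y_u\in[0,\tau)$ gives $\int_0^\tau g(y_u)\,dy_u$ directly. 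The symmetric argument handles $v$. With the Gain Sharing rule read correctly and this monotonicity step supplied, the proof is a two-line observation; your proposed route through partner ranks and substitution is both unneeded and unsound.
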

\begin{proof}
	Consider $y_u = y < \tau$. 
	By the definition of $\tau$, we know that for all $y_v\in[0,1]$, $u$ is passive in $M(y_u=y,y_v, \vecmv[uv])$, because inserting $v$ (with any rank) to the graph cannot make $u$ worse (by Lemma~\ref{lemma:alternating_path}).
	Thus, for all $y_u < \tau$ and $y_v\in[0,1]$, we have $\alpha_u = g(y_u)$, which correspond to the first term of the RHS.
	For the same reason, for all $y_u\in[0,1]$, $v$ is passive in $M(y_v<\gamma,y_u, \vecmv[uv])$, which gives $\alpha_v = g(y_v)$, and the second term of the RHS.
\end{proof}

For all $y_u\in[0,1]$, let $\theta(y_u)$ be the marginal rank of $v$ w.r.t.\ $\vecmv[v] = (y_u,\vecmv[uv])$.
Recall $v$ is always passive (regardless of $y_u$) when $y_v < \gamma$.
Hence, we have $\theta(y_u) \geq \gamma$ for all $y_u\in[0,1]$.


\begin{lemma}\label{lemma:below}
	For any fixed $y_u > \tau$, we have
	\begin{equation*}
	\expect{y_v}{\alpha_u + \alpha_v\cdot \onev(y_v > \gamma)} \geq 1-\gamma - (1-\theta(y_u))\cdot g(\theta(y_u)) 
	+ \gamma\cdot \min\{g(y_u), 1-g(\theta(y_u))\}.
	\end{equation*}
\end{lemma}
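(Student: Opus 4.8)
The plan is to fix $y_u > \tau$ and analyze the behavior of the algorithm as $y_v$ ranges over $[0,1]$, splitting according to whether $y_v$ is above or below the marginal rank $\theta(y_u)$ of $v$ with respect to $\vecmv[v] = (y_u, \vecmv[uv])$. Since $y_u > \tau$, vertex $u$ is active in $M(\vecmv[uv])$, i.e., in the graph $G - \{v\}$. I first recall the standard facts, analogous to those used in Section~\ref{sec:ranking_instance}: when $y_v > \theta(y_u)$, vertex $u$ matches some vertex of rank $\theta(y_u)$, so $\alpha_u = 1 - g(\theta(y_u))$; when $y_v < \theta(y_u)$, inserting $v$ with a small rank can only help $u$ (by Lemma~\ref{lemma:alternating_path} applied to $u$ in the graph with $v$ present), so $u$ is either passive (giving $\alpha_u = g(y_u)$) or active matching a vertex of rank at most $\theta(y_u)$ (giving $\alpha_u \ge 1 - g(\theta(y_u))$); hence $\alpha_u \ge \min\{g(y_u), 1 - g(\theta(y_u))\}$ whenever $y_v < \theta(y_u)$. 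This already lower-bounds $\expect{y_v}{\alpha_u}$ by $\theta(y_u)\cdot\min\{g(y_u),1-g(\theta(y_u))\} + (1-\theta(y_u))\cdot(1-g(\theta(y_u)))$.

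The extra ingredient — the whole point of the lemma — is to account for $\alpha_v \cdot \onev(y_v > \gamma)$. For $\gamma < y_v < \theta(y_u)$, vertex $v$ is passive (by definition of $\theta(y_u)$, since $y_v$ is below the marginal rank), so $\alpha_v = g(y_v)$; integrating over this range contributes $\int_\gamma^{\theta(y_u)} g(y_v)\,dy_v$. For $y_v > \theta(y_u)$, I give up any gain from $\alpha_v$ and rely only on $\alpha_u = 1 - g(\theta(y_u))$. Combining, for $y_v > \gamma$ the integrand is at least $g(y_v) + \min\{g(y_u),1-g(\theta(y_u))\}$ on $(\gamma,\theta(y_u))$ — wait, more carefully: on $(\gamma,\theta(y_u))$ we have $\alpha_u \ge \min\{g(y_u),1-g(\theta(y_u))\}$ and $\alpha_v = g(y_v)$, while on $(\theta(y_u),1)$ we have $\alpha_u = 1 - g(\theta(y_u))$ and we drop $\alpha_v$. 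So
\begin{align*}
\expect{y_v}{\alpha_u + \alpha_v \onev(y_v > \gamma)} &\ge \gamma\cdot\min\{g(y_u),1-g(\theta(y_u))\} \\
&\quad + \int_\gamma^{\theta(y_u)}\big(\min\{g(y_u),1-g(\theta(y_u))\} + g(y_v)\big)\,dy_v \\
&\quad + \int_{\theta(y_u)}^{1}(1-g(\theta(y_u)))\,dy_v.
\end{align*}
The claimed bound should then follow by bounding $\min\{g(y_u),1-g(\theta(y_u))\} \ge 0$ on the middle integral's $\alpha_u$ part? No — that would lose the $\gamma\cdot\min\{\cdots\}$ term. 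Instead I keep $\alpha_u \ge \min\{g(y_u),1-g(\theta(y_u))\}$ only on the first $\gamma$ mass (the $y_v < \gamma$ region is not in scope here; the indicator on $\alpha_v$ is what restricts us, not on $\alpha_u$). Let me restate: the statement bounds $\expect{y_v}{\alpha_u + \alpha_v\onev(y_v>\gamma)}$ over all of $y_v \in [0,1]$. On $(0,\gamma)$: $\alpha_u \ge \min\{g(y_u),1-g(\theta(y_u))\}$, no $\alpha_v$. On $(\gamma,\theta(y_u))$: $\alpha_u \ge$ same, plus $\alpha_v = g(y_v)$. On $(\theta(y_u),1)$: $\alpha_u = 1-g(\theta(y_u))$. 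Summing, the $\alpha_u$ contribution is at least $\theta(y_u)\min\{g(y_u),1-g(\theta(y_u))\} + (1-\theta(y_u))(1-g(\theta(y_u)))$ and the $\alpha_v$ contribution is $\int_\gamma^{\theta(y_u)} g(y_v)\,dy_v \ge (\theta(y_u)-\gamma)\cdot g(\gamma)$, but the cleaner route is to not integrate $g$ but to reorganize algebraically to match $1 - \gamma - (1-\theta(y_u))g(\theta(y_u)) + \gamma\min\{g(y_u),1-g(\theta(y_u))\}$.

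The main obstacle I anticipate is matching the exact closed form on the right-hand side: the term $1 - \gamma - (1-\theta(y_u))g(\theta(y_u))$ must come out of combining $(1-\theta(y_u))(1-g(\theta(y_u)))$ with the $\alpha_v$-integral and the portion of $\alpha_u$ on $(\gamma,\theta(y_u))$, and there is a bookkeeping subtlety about where the $\theta(y_u)\cdot\min\{\cdots\}$ splits into $\gamma\cdot\min\{\cdots\}$ plus $(\theta(y_u)-\gamma)\cdot\min\{\cdots\}$ with the latter being absorbed (using $\min\{g(y_u),1-g(\theta(y_u))\} \ge 1 - g(\theta(y_u))$? — no, that's false in general). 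The correct resolution is likely to lower-bound $\alpha_v = g(y_v) \ge g(\gamma)$ is too weak; rather, I expect one uses that on $(\gamma,\theta(y_u))$, $\alpha_u + \alpha_v$ behaves well because when $u$ is passive there $v$ might actually match $u$ (recovering the Section~\ref{sec:ranking_instance} phenomenon), and when $u$ is active, $\alpha_u \ge 1-g(\theta(y_u))$ which combines with $\alpha_v \ge 0$. Carefully tracking which of these sub-cases occurs — and in particular establishing $\alpha_u + \alpha_v \onev(y_v>\gamma) \ge 1 - g(\theta(y_u))$ pointwise on $(\gamma,1)$ together with the separate $\gamma\cdot\min\{g(y_u),1-g(\theta(y_u))\}$ contribution from $(0,\gamma)$ — is the delicate part, and I would spend the bulk of the proof there, invoking Lemma~\ref{lemma:alternating_path} to compare the matching $M(y_u,y_v,\vecmv[uv])$ against $M(y_u,\vecmv[uv])$ as $v$ is inserted.
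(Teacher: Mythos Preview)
Your decomposition of $y_v$ into $(0,\gamma)$, $(\gamma,\theta(y_u))$, $(\theta(y_u),1)$ is correct, and your bounds on the first and third intervals match the paper's. The genuine gap is on the middle interval $(\gamma,\theta(y_u))$: both of your candidate pointwise bounds there are strictly too weak to recover the stated right-hand side. Using $\alpha_u \ge m := \min\{g(y_u),1-g(\theta(y_u))\}$ together with $\alpha_v = g(y_v)$ yields at most $(\theta(y_u)-\gamma)(1-g(\theta(y_u))) + \int_\gamma^{\theta(y_u)} g(y_v)\,dy_v \le \theta(y_u)-\gamma$ on that interval (since $m \le 1-g(\theta(y_u))$ and $g(y_v)\le g(\theta(y_u))$), with strict inequality whenever $m < 1-g(\theta(y_u))$. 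Your final guess $\alpha_u+\alpha_v \ge 1-g(\theta(y_u))$ on $(\gamma,1)$ gives only $(1-\gamma)(1-g(\theta(y_u)))$ total there, which falls short of the required $(\theta(y_u)-\gamma)+(1-\theta(y_u))(1-g(\theta(y_u)))$ by $(\theta(y_u)-\gamma)g(\theta(y_u))$. So neither route closes.

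What the paper actually proves on $(\gamma,\theta(y_u))$ is the sharper pointwise bound $\alpha_u+\alpha_v \ge 1$. The argument looks at the \emph{first} moment at which one of $u,v$ is matched. If $v$ is matched first, it is matched passively by some $z$; but $z$ must equal $u$, since otherwise $z$ has deadline before $u$ and $v$ would still be passive in $G-\{u\}$, contradicting $y_v>\gamma$. Hence $\alpha_u+\alpha_v=1$. If $u$ is matched first, then $u$ must be active (a passive match of $u$ would persist in $G-\{v\}$, contradicting $y_u>\tau$), and at $u$'s deadline $v$ is still available with rank $y_v$, so $u$ matches a vertex of rank at most $y_v$ --- not merely at most $\theta(y_u)$ --- giving $\alpha_u \ge 1-g(y_v)$; combined with $\alpha_v=g(y_v)$ this again yields $\alpha_u+\alpha_v \ge 1$. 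Integrating $1$ over $(\gamma,\theta(y_u))$ then gives exactly the $(\theta(y_u)-\gamma)$ term you were missing, and the algebra collapses to the claimed bound. Your sketch gestures at the first case (``$v$ might actually match $u$'') but never isolates the second, sharper $\alpha_u \ge 1-g(y_v)$ observation, which is the crux.
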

\begin{proof}
	By the definition of $\theta(y_u)$, we know that when $y_v = \theta(y_u)^+$ (slightly larger than $\theta(y_u)$), $v$ is not passive.
	Thus, $u$ must be matched.
	Moreover, $u$ must be active.
	Otherwise $u$ should remain passive when $v$ is removed, because the deadline of $v$ is later than $u$, which contradicts the definition of $\tau$ (recall that we fix some $y_u > \tau$).
	Hence, when $y_v = \theta(y_u)^+$, $u$ actively matches some vertex with rank at most $\theta(y_u)$.
	As increasing the rank of $v$ does not create any difference to the final matching, for all $y_v > \theta(y_u)$, we have $\alpha_u \geq 1-g(\theta(y_u))$.
	
	Note that it is possible that $\theta(y_u) = 1$, i.e., $v$ is passive for all rank $y_v \in [0,1]$, in which case the above lower bound still holds.
	Since the graph is bipartite, by Lemma~\ref{lemma:alternating_path}, for all $y_v< \theta(y_u)$, we have $\alpha_u \geq \min\{g(y_u), 1-g(\theta(y_u))\}$.
	
	Finally, we show that for any $y_v\in (\gamma, \theta(y_u))$, we have $\alpha_u + \alpha_v \geq 1$.
	Fix any $y_v \in (\gamma,\theta(y_u))$.
	By definition $v$ is passive. Consider the first moment when one of $u,v$ is matched.
	
	Suppose at this moment, $v$ is matched (passively) by some vertex $z$.
	Then, we show that $z = u$, which gives $\alpha_u + \alpha_v = 1$.
	Otherwise, $z$ must have an earlier deadline than $u$.
	Then, we know that $v$ remains passive with $u$ removed, which contradicts the definition of $\gamma$.
	
	Suppose at this moment, $u$ is matched.
	Then we know that $u$ must active, as otherwise $u$ remains passive with $v$ removed, which contradicts the definition of $\tau$.
	Suppose $u$ matches some vertex $z$.
	Since $v$ is not matched at this moment, the rank of $z$ is no more than $y_v$, which implies $\alpha_u \geq 1-g(y_v) = 1-\alpha_v$, as required.

	To sum up, for any fixed $y_u > \tau$, we have
	\begin{eqnarray*}
	\expect{y_v}{\alpha_u + \alpha_v\cdot \onev(y_v > \gamma)} 
	& \geq & 
	\int_0^\gamma \alpha_u d y_v + \int_\gamma^{\theta(y_u)} (\alpha_u + \alpha_v) d y_v + \int_{\theta(y_u)}^1 \alpha_u d y_v \\
	& \geq & 
	\vphantom{\bigg(}
	\gamma\cdot \min\{g(y_u), 1-g(\theta(y_u))\} + (\theta(y_u) - \gamma) + (1-\theta(y_u))\cdot (1-g(\theta(y_u))) \\
	& \geq & 
	\vphantom{\bigg(}
	1-\gamma - (1-\theta(y_u))\cdot g(\theta(y_u)) + \gamma\cdot \min\{g(y_u), 1-g(\theta(y_u))\},
	\end{eqnarray*}
	as claimed.
\end{proof}

Combing the two lemmas, we have the following lower bound. Observe that the following bound degrades to the one we derived for the hard instance in Subsection~\ref{sec:ranking_instance}, when $\gamma=0$.

\begin{lemma}\label{lemma:bipartite_main}
	For any neighbor $u$ of $v$ that has an earlier deadline than $v$, and for any $\vecmv[uv]$, we have
	\begin{align*}
	\expect{}{\alpha_u + \alpha_v} \geq  \min_{\tau, 0\leq\gamma\leq\theta\leq 1} \Bigg\{ \int_0^\tau g(y_u)dy_u 
	&+ \int_0^\gamma g(y_v)dy_v + (1-\tau)(1-\gamma - (1-\theta)\cdot g(\theta))\\
	& + \gamma\cdot\int_\tau^1 \min\{1-g(\theta),g(y_u)\} dy_u \Bigg\}
	\end{align*}
\end{lemma}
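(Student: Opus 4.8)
The plan is to combine Lemma~\ref{lemma:above} and Lemma~\ref{lemma:below} by splitting the sample space of $(y_u,y_v)$ along the thresholds $\tau$ and $\gamma$. Ignoring the measure-zero events $y_u=\tau$ and $y_v=\gamma$, I would write $\alpha_u+\alpha_v = \big(\alpha_u\onev(y_u<\tau)+\alpha_v\onev(y_v<\gamma)\big) + \big(\alpha_u\onev(y_u>\tau)+\alpha_v\onev(y_v>\gamma)\big)$ and take expectations. The first bracket is handled exactly by Lemma~\ref{lemma:above}, contributing $\int_0^\tau g(y_u)\,dy_u+\int_0^\gamma g(y_v)\,dy_v$. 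For the second bracket I condition on $y_u$: when $y_u<\tau$ the term $\alpha_u\onev(y_u>\tau)$ vanishes while $\alpha_v\onev(y_v>\gamma)\ge 0$, so that slice only helps; when $y_u>\tau$, Lemma~\ref{lemma:below} lower bounds $\expect{y_v}{\alpha_u+\alpha_v\onev(y_v>\gamma)}$. Integrating over $y_u\in(\tau,1)$ and adding the first piece yields
\begin{align*}
\expect{}{\alpha_u+\alpha_v} &\ge \int_0^\tau g(y_u)\,dy_u + \int_0^\gamma g(y_v)\,dy_v \\
&\quad + \int_\tau^1\Big(1-\gamma-(1-\theta(y_u))g(\theta(y_u)) + \gamma\min\{g(y_u),1-g(\theta(y_u))\}\Big)\,dy_u.
\end{align*}

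The remaining step is to replace the per-$y_u$ threshold $\theta(y_u)$ by a single constant $\theta$, as in the statement. I claim $\theta(y_u)$ is in fact constant for $y_u\in(\tau,1)$. Indeed, for such $y_u$ the vertex $u$ is never passive in $M(y_u,y_v,\vecmv[uv])$ (as noted in the proof of Lemma~\ref{lemma:below}: a passive $u$ would remain passive after deleting the later-deadline vertex $v$, contradicting $y_u>\tau$); hence no active vertex with a deadline before $u$'s ever selects $u$, so raising $y_u$ within $(\tau,1)$ alters neither those vertices' choices nor the (lowest-ranked available) neighbour $u$ itself picks at its deadline, and therefore the entire run, and in particular $v$'s marginal rank, is unchanged. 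Writing $\theta$ for this common value and recalling $\theta\ge\gamma$ (from the discussion preceding Lemma~\ref{lemma:below}) and $\theta\le1$, the displayed integral collapses to $(1-\tau)\big(1-\gamma-(1-\theta)g(\theta)\big)+\gamma\int_\tau^1\min\{1-g(\theta),g(y_u)\}\,dy_u$, so $\expect{}{\alpha_u+\alpha_v}$ is at least the bracketed expression of the lemma evaluated at the particular triple $(\tau,\gamma,\theta)$. Since this triple satisfies $0\le\gamma\le\theta\le1$ and $\tau\in[0,1]$, the value is no smaller than the minimum over all admissible triples, which is exactly the claimed bound.

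The main obstacle is the constancy of $\theta(\cdot)$ on $(\tau,1)$: the delicate point is that $\tau$ is the marginal rank of $u$ in $G-\{v\}$ while $\theta(y_u)$ lives in $G$ with $v$ present, so one must carefully invoke the monotonicity of \ranking (Lemma~\ref{lemma:alternating_path} together with the fact that deleting a vertex of later deadline cannot turn a passive $u$ into a non-passive one) to rule out that some early active vertex grabs $u$ once $y_u$ crosses $\tau$. If this clean statement turns out to need more care, the fallback is to keep $\theta(\cdot)$ as a (monotone) function of $y_u$ and argue via an exchange argument on the integrand, as a function of $\theta$, that $\int_\tau^1(\cdots)\,dy_u$ is minimized when $\theta(\cdot)$ is constant. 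Everything else — the decomposition, the two lemma applications, Fubini, and the final minimization over $(\tau,\gamma,\theta)$ — is routine, using only non-negativity of the dual variables.
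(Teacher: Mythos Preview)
Your decomposition and the way you splice Lemmas~\ref{lemma:above} and~\ref{lemma:below} together are exactly what the paper does. The gap is in your argument that $\theta(y_u)$ is constant on $(\tau,1)$. You assert that for every $y_u>\tau$ and \emph{every} $y_v$, vertex $u$ is not passive in $M(y_u,y_v,\vecmv[uv])$, because ``a passive $u$ would remain passive after deleting the later-deadline vertex $v$.'' That implication is false in general. The step you cite from the proof of Lemma~\ref{lemma:below} is invoked there only at $y_v=\theta(y_u)^+$, where $v$ is \emph{not} passive; in that situation no vertex ever selects $v$ before $v$'s deadline, so deleting $v$ really does leave the run up to $u$'s deadline untouched. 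When $y_v$ is small enough that $v$ is passively grabbed by some vertex with deadline before $u$'s, deleting $v$ redirects that vertex and can cascade to change $u$'s status. Concretely, take vertices $a,c,u,b,v$ with deadlines in this order, edges $(a,c),(a,v),(c,u),(u,b),(u,v)$ (bipartite with sides $\{a,u\}$ and $\{c,b,v\}$), and ranks satisfying $y_b<y_c$. In $G-\{v\}$, $a$ must match $c$, so $u$ is never passive and $\tau=0$; yet in $G$ with $y_v<y_c$, $a$ matches $v$ and then $c$ matches $u$, making $u$ passive for every $y_u>\tau$. Your premise therefore fails, and with it the conclusion that the run is independent of $y_u$ on $(\tau,1)$.

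The paper's route to constancy is different: it fixes $\theta:=\theta(\tau^+)$ and argues only at the two specific ranks $y_v=\theta^+$ and $y_v=\theta^-$. At $\theta^+$ it uses precisely the fact that $v$ is not passive there to justify the deletion argument and obtain $\theta(y_u)\le\theta$; a separate argument at $\theta^-$ gives $\theta(y_u)\ge\theta$. Your proposed fallback also does not rescue the lemma as stated: bounding the integrand pointwise by its minimum over $\theta$ yields $\int_\tau^1\min_\theta(\cdots)\,dy_u$, which is in general \emph{smaller} than $\min_\theta\int_\tau^1(\cdots)\,dy_u$, so you would end up with a weaker inequality than the one the lemma asserts.
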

\begin{proof}
	First, we show that there exists $\theta$ such that $\theta(y_u) = \theta$ for all $y_u > \tau$.
	Consider the graph with $v$ removed, and let $y_u = \tau^+$.
	By the definition of $\tau$, $u$ is not passive.
	\begin{enumerate}
		\item If $u$ is unmatched, then we know that after inserting $v$ with any $y_v \in [0,1]$, $v$ is passive, as otherwise $u$ will be matched with $v$ removed.
		Hence, we have $\theta(y_u) = 1$ for all $y_u> \tau$;
		\item Otherwise, $u$ is active.
		Let $\theta = \theta(\tau^+)$.
		Then, we know that $v$ is not passive when inserted to the graph with $y_v = \theta^+$.
		Moreover, we know that $u$ is active after the insertion: if $u$ is passive, then $u$ remains passive with $v$ removed, which contradicts the definition of $\tau$.
		Since increasing $y_u$ does not change the matching, we have $\theta(y_u) \leq \theta$ for all $y_u > \tau$.
		On the other hand, when $y_v = \theta^\text{-}$ and $y_u = \tau^+$, $u$ is active and $v$ is passive.
		Since increasing $y_u$ does not change the matching, we have $\theta(y_u) \geq \theta$ for all $y_u > \tau$.
		The sandwiching bounds imply that $\theta(y_u) = \theta$ for all $y_u > \tau$.
	\end{enumerate}

	Hence, combining Lemma~\ref{lemma:above} and~\ref{lemma:below}, we have
	\begin{align*}
	& \expect{}{\alpha_u+\alpha_v} \geq \expect{}{\alpha_u\cdot \onev(y_u<\tau) + \alpha_v\cdot \onev(y_v<\gamma)}
	+\int_\tau^1 \expect{y_v}{\alpha_u+\alpha_v\cdot \onev(y_v > \gamma)} d y_u \\
	= & \int_0^\tau g(y_u) dy_u + \int_0^\gamma g(y_v) d y_v + \int_{\tau}^1 \Big( 1-\gamma - (1-\theta)\cdot g(\theta) + \gamma\cdot \min\{g(y_u), 1-g(\theta)\}  \Big) d y_u \\
	= & \int_0^\tau g(y_u) dy_u + \int_0^\gamma g(y_v) d y_v + (1-\tau)\cdot\big(1-\gamma - (1-\theta)\cdot g(\theta)\big) + \gamma\cdot\int_{\tau}^1  \min\{g(y_u), 1-g(\theta)\} d y_u.
	\end{align*}
	
	Taking minimum over $\tau$ and $\gamma \leq \theta$ gives Lemma~\ref{lemma:bipartite_main}.
\end{proof}

\begin{proofof}{Theorem~\ref{thm:ranking_bip}}
	Fix the non-decreasing function $g$ as follows:
	\begin{equation*}
	g(y) = \begin{cases}
	\frac{c}{1-y}, \qquad&\text{when } y< \frac{1-2c}{1-c},\\
	1-c, \qquad&\text{when } \frac{1-2c}{1-c}\leq y <1,\\
	1, \qquad&\text{when } y = 1,
	\end{cases}
	\end{equation*}
	where $c = \frac{1}{1+e^\Omega}\approx 0.3619$.
	Let $f(\tau, \gamma, \theta)$ denote the expression to be minimized on the RHS of Lemma~\ref{lemma:bipartite_main}.
	Then, we have
	\begin{align*}
		f(\tau, \gamma, \theta)  = \int_0^\tau g(y_u) dy_u & + \int_0^\gamma g(y_v) dy_v + (1-\tau)\Big(1-\gamma-(1-\theta)\cdot g(\theta)\Big)\\
		& + \gamma\cdot \int_\tau^1 \min\{g(y_u),1-g(\theta)\} dy_u.
	\end{align*}
	
	Fix any $\gamma$ and $\theta$, and suppose $g(\tau) < 1-g(\theta)$, then observe that
	\begin{eqnarray*}
	\frac{\partial f(\tau,\gamma,\theta)}{\partial \tau} 
	& = & 
	g(\tau) - (1-\gamma-(1-\theta)\cdot g(\theta)) -\gamma\cdot \min\{g(\tau),1-g(\theta)\} \\
	& = & (1-\gamma)\cdot g(\tau) - (1-\gamma) + (1-\theta)\cdot g(\theta) \\
	& < & (1-\gamma) (1 - g(\theta)) - (1-\gamma) + (1-\theta)\cdot g(\theta) \\
	& = & (\gamma - \theta)\cdot g(\theta) \leq 0.
	\end{eqnarray*}
	Here, the last inequality holds because we have $\theta \geq \gamma$ by their definitions.
	
	Thus, the minimum of $f(\tau,\gamma,\theta)$ over $\tau\in[0,1]$, $0\leq \gamma\leq \theta\leq 1$ must be obtained when $g(\tau) \geq 1-g(\theta)$.
	As a result, we get that
	\begin{equation*}
	f(\tau,\gamma,\theta) = \int_0^\tau g(y_u) dy_u + \int_0^\gamma g(y_v) dy_v + (1-\tau)\Big( 1-(1-\theta+\gamma)\cdot g(\theta) \Big).
	\end{equation*}
	
	If we relax the constraint that $\theta \geq \gamma$, then the maximum of $(1-\theta+\gamma)g(\theta)$ is achieved when $\theta^* = \frac{1-2c}{1-c}$ (for which $g(\theta^*) = 1-c$). 
	Note that the maximum is $(\frac{c}{1-c}+\gamma)\cdot(1-c) = (1-c)\cdot\gamma + c$, which is greater than the value of expression when $\theta = 1$, i.e., $\gamma$.
	Thus, we have
	\begin{equation*}
	f(\tau,\gamma,\theta) \geq f(\tau,\gamma,\theta^*) = \int_0^\tau g(y_u) dy_u + \int_0^\gamma g(y_v) dy_v + (1-\tau)\cdot(1-\gamma)\cdot (1-c).	
	\end{equation*}
	
	It is easy to see that the minimum of $f(\tau,\gamma,\theta^*)$ must be achieved when $\gamma < \theta^* = \frac{1-2c}{1-c}$ (for which $g(\gamma) < 1-c$), as otherwise the partial derivative
	\begin{equation*}
	\frac{\partial f(\tau,\gamma,\theta^*)}{\partial \gamma} = g(\gamma)-(1-\tau)\cdot(1-c) \geq 0.
	\end{equation*}
	
	Since $f(\tau,\gamma,\theta^*)$ is symmetric for $\tau$ and $\gamma$, the same conclusion holds for $\tau$, which means
	\begin{align*}
	f(\tau,\gamma,\theta^*) = & \int_0^\tau \frac{c}{1-x} dx + \int_0^\gamma \frac{c}{1-x} dx + (1-\tau)\cdot(1-\gamma)\cdot (1-c) \\
	= & -c \ln(1-\tau) -c \ln(1-\gamma) + (1-\tau)\cdot(1-\gamma)\cdot(1-c) \\
	\geq & c-c\cdot\ln(\frac{c}{1-c}) = \frac{1+\Omega}{1+e^\Omega} = \Omega,
	\end{align*}
	where the inequality comes from the fact that (take $(1-\tau)\cdot(1-\gamma)$ as the variable) function $(1-c)\cdot x-c\cdot\ln(x)$ achieves its minimum when $x = \frac{c}{1-c}$.
\end{proofof}

{
	\bibliography{matching}
	\bibliographystyle{alpha}
}

\newpage
\appendix

\section{Primal-Dual Connection between the Upper and Lower Bounds} \label{sec:wtf_gh_dual}

We provide an interesting primal-dual connection between the primal dual analysis in Section~\ref{sec:wtf_lower} and the hard instance in Section~\ref{sec:wtfhardness}, which inspires us to find the correct $h$ function in Section~\ref{sec:wtfhardness}.

Recall the following lower bound established in the proof of Theorem~\ref{thm:wtf_lower},
\[
\alpha_u +\alpha_v \ge \min  \left\{  \int_0^1 g(x) dx, \min_{{p_u, x_v}} \{\int_0^{p_u}g(x)dx + (1-p_u)(1-g(x_v)) + \int_0^{x_v}g(x)dx \} \right\}.
\]
We are left to optimize function $g$ using the following linear program:
\begin{align*}
\max_g: \qquad &r\\
\text{s.t.}\qquad &r\leq  \int_0^x g(y) dy + \int_0^{z} g(y)dy + (1-x)(1-g(z)), \quad \forall x,z\in[0,1].
\end{align*}
After solving it, we remove redundant constraints with slacks and consider the following program:
\begin{align*}
(P) \quad \max_g: \qquad &r\\
\text{s.t.}\qquad & r\leq  \int_0^x g(y) dy + \int_0^{c-x} g(y)dy + (1-x)(1-g(c-x)), \quad \forall x\in[0,c],
\end{align*}
where $c=2-\sqrt{2}$. We know that the above two programs have the same optimal value. Moreover, as the program suggests, in order to construct a tight hard instance, all pairs $u,v$ matched in $\opt$ must satisfy $p_u + x_v = c$ when \wtf is run\footnote{Constraints must be tight almost everywhere. As otherwise, our primal dual analysis proves the competitive ratio of \wtf is strictly greater than $c$ on the specific instance.}. According to the instance structure and argument in Section~\ref{sec:wtfhardness}, it suffices to find a function $h:[0,1]\to[0,1]$ so that
\[
\int_0^x \frac{1-\tau(y)}{1-y+h(y)} = \tau(h(x)) = c - \tau(x), \quad \forall x\in[0,1].
\]

Here, $\tau:[0,1]\to[0,c]$ corresponds to the stationary water level and gives the first equation. Moreover, the perfect partner corresponding to $x$ also has water level $\tau(h(x))$ and we require it to be $c-\tau(x)$, which gives the second equation.
Therefore, $h(x)=\tau^{-1}(c-\tau(x))$. Let $f(x) = \tau^{-1}(x)$ and taking derivative over the above equation, it suffices to prove the existence of $f,\tau,h$ so that
\begin{align*}
\frac{1-\tau(x)}{1-x+h(x)}=-\tau'(x) & \Leftrightarrow \frac{1-\phi}{1-f(\phi)+f(c-\phi)}=-\frac{1}{f'(\phi)} \\ 
& \Leftrightarrow 1-f(\phi)+f(c-\phi)+(1-\phi)f'(\phi)=0, \quad \forall \phi\in[0,c].
\end{align*}

Now, consider the dual program of $P$ :
\begin{align*}
\min_q: \qquad & \int_0^c (1-x)q(x)dx\\
\text{s.t.}\qquad & 1-\int_0^c q(x) dx \leq 0, \\
& \int_0^x q(y)dy + \int_{c-x}^c q(y) dy - (1-x)q(x) \leq 0, \quad \forall x\in[0,c].
\end{align*}
According to primal dual theory, we know that the optimal dual solution $q(x)$ satisfies $\int_0^c q(x) dx = 1$ and $\int_0^x q(y)dy + \int_{c-x}^c q(y) dy + (1-x)q(x) = 0$.
Let $Q(x) = 1 - \int_0^x q(y)dy = \int_x^c q(y)dy$, we have
\begin{equation*}
1-Q(x)+Q(c-x)+(1-x)Q'(x)=0, \quad \forall x\in[0,c],
\end{equation*}
which is exactly the same equation we required for $f$.

\end{document}